\newcommand{\one}{\textbf{1}}
\newcommand{\R}{\mathbb{R}}
\newcommand{\al}{\alpha}
\newcommand{\xh}{\widehat{\x}}
\newcommand{\Xh}{\widehat{\X}}
\theoremstyle{theorem}
\newtheorem{propo}{Proposition}[section]
\newtheorem{thm}{Theorem}[section]
\newtheorem{lem}{Lemma}[section]
\newtheorem{ass}{Assumption}
\newtheorem{ex}{Example}
\theoremstyle{remark}
\newtheorem{remark}{Remark}
\theoremstyle{definition}
\newcommand{\Exp}{\mathbb{E}}               
\newcommand{\E}{\mathbb{E}}                    
\newcommand{\la}{{\lambda}}                     
\newcommand{\sig}{\sigma}
\newcommand{\nn}{\notag}
\newcommand{\X}{{X}}
\newcommand{\x}{\mathbf{x}}
\newcommand{\w}{\mathbf{w}}
\newcommand{\y}{\mathbf{y}}
\newcommand{\s}{\mathbf{s}}
\newcommand{\z}{\mathbf{z}}
\newcommand{\beq}{\begin{equation}}
\newcommand{\eeq}{\end{equation}}
\newcommand{\bea}{\begin{align}}
\newcommand{\eea}{\end{align}}
\icmltitlerunning{Quantized Decentralized Stochastic Learning over Directed Graphs}
\begin{document}
\twocolumn[
\icmltitle{Quantized Decentralized Stochastic Learning over Directed Graphs}
\begin{icmlauthorlist}
\icmlauthor{Hossein Taheri}{sb}
\icmlauthor{Aryan Mokhtari}{te}
\icmlauthor{Hamed Hassani}{penn}
\icmlauthor{Ramtin Pedarsani}{sb}
\end{icmlauthorlist}
\icmlaffiliation{sb}{Department of Electrical and Computer Engineering, University of California, Santa Barbara, Santa Barbara, CA, USA.}
\icmlaffiliation{te}{Department of Electrical and Computer Engineering, University of Texas at Austin, Austin, TX, USA.}
\icmlaffiliation{penn}{Department of Electrical and Systems Engineering, University of Pennsylvania, Philadelphia, PA, USA}
\icmlcorrespondingauthor{Hossein Taheri}{hossein@ucsb.edu}
\icmlkeywords{Decentralized Optimization, Quantization}
\vskip 0.3in
]
\printAffiliationsAndNotice{}

\begin{abstract}
We consider a decentralized stochastic learning problem where data points are distributed among computing nodes communicating over a directed graph. As the model size gets large, decentralized learning faces a major bottleneck that is the heavy communication load due to each node transmitting large messages (model updates) to its neighbors. To tackle this bottleneck, we propose the quantized decentralized stochastic learning algorithm over directed graphs that is based on the push-sum algorithm in decentralized consensus optimization. We prove that our algorithm achieves the same convergence rates of the decentralized stochastic learning algorithm with exact-communication for both convex and non-convex losses. Numerical evaluations corroborate our main theoretical results and illustrate significant speed-up compared to the exact-communication methods.
\end{abstract}
\section{Introduction}\label{sec:intro}

In many machine learning applications, optimization problems involving massive datasets often require the use of multiple computing agents to accelerate convergence. Moreover, in scenarios such as Federated Learning, each agent possesses its own local dataset. However, sharing data among agents is often impractical due to constraints such as privacy concerns and limited bandwidth. In such cases, computing nodes rely on their local data to perform (stochastic) gradient descent, exchanging parameters with other agents at each iteration to ensure convergence to the optimal solution of the global objective. More precisely, the goal of distributed learning is to  optimize a function $f:\R^d \rightarrow \R$  defined as the average of functions $f_i(\cdot)$, $i \in [n]$ of all computing nodes, i.e., 
\begin{align}\label{eq:gen_opt}
\xh := \arg \min_{\x\in\mathbb{R}^d} \left\{f(\x) =\frac{1}{n} \sum_{i=1}^n f_i(\x)\right\}.
\end{align}
Here, each local function $f_i(\cdot)$ can be considered as the loss incurred over the local data-set of node $i$: 
\begin{align}
f_i(\x) := \frac{1}{m_i}\sum_{j=1}^{m_i}\ell(\x,\zeta_j^i),
\end{align}
where  $\ell : \mathbb{R}^d\times\mathbb{R}^{d'}\rightarrow \mathbb{R}$ is the loss function, $m_i$ is the data-set size of node $i$, and $\zeta_j^i$ denotes the $d'$-dimensional local data points.  
Classical decentralized setting consists of two steps. First each computing node $i$ runs (stochastic) gradient descent algorithm on its local function $f_i(\cdot)$ using its local parameter $\x_i(t)$ and local data-set. Then the local parameters are exchanged between neighbor workers to compute a weighted average of neighbor nodes' parameters. Local parameters of every node $i$ at iteration $t+1$ is then obtained by a combination of the weighted average of neighboring nodes solutions and a negative descent direction based on local gradient direction. In particular, if we define $\x_i(t)$ as the decision variable of node $i$ at step $t$, then its local update can be written as 
\begin{align}\label{eq:decentralized}
    \x_i(t+1) = \sum_{j=1}^n a_{ij}\, \x_j(t) - \alpha(t) \nabla F_i\Big(\x_i(t), \zeta_{i,t}\Big).
\end{align}
Here $\alpha(t)\geq 0$ is the stepsize and $\nabla F_i$ is the stochastic gradient of function $f_i$ evaluated using a random sample of data-set of node $i$. Matrix $A$ represents the weights used in the averaging step and in particular $a_{ij}$ is the weight that node $i$ assigns to node $j$. It can be shown that under some conditions for the weight matrix $A$, the iterates of the update in~\eqref{eq:decentralized} converge to the optimal solution $\xh$ when local functions $f_i(\cdot)$ are convex \cite{yuan2016convergence} and to a first-order stationary point in the non-convex case \cite{lian2017can}. Most commonly in these settings, workers communicate over an \textit{undirected} connected graph $\mathcal{G} = \{V, E\}$, and to derive these theoretical results the weight matrix $A$ should have the following properties:

\begin{enumerate}
\item $A$ is doubly stochastic, i.e., $\one^TA = \one^T ;\; A \one = \one$, where $\one$ indicates an $n$-dimensional vector of all $1$'s;
\item $A$ is symmetric:  $A^T = A$;
\item The spectral gap of $A$ is strictly positive, i.e, the second largest eigenvalue of $A$ satisfies $\la_2(A)<1$.
\end{enumerate}
 It can be shown that these assumptions guarantee that the $t$-th power of $A$, i.e., $A^t$, converges to the matrix $\frac{1}{n}\one\one^T$ at a linear rate (i.e., exponentially fast); see, e.g., Lemma 5 in \cite{lian2017can}. This ensures the consensus among different workers in estimating the optimum solution $\xh$. However for \emph{directed} graphs, satisfying the first and second constraints are not generally possible. Over the last few years there have been several works to tackle decentralized optimization over directed graphs, e.g.,  \cite{blondel2005convergence} showed that for row-stochastic matrices with positive entries on the diagonal, the matrix $A^t$ converges to $\bm\one\bm\phi^T$ at a linear rate, where $\bm\phi$ is a stochastic vector. Based on this result \cite{nedic2014distributed} proposed the push-sum algorithm for decentralized optimization over (time-varying) directed graphs. The basic intuition is that the algorithm estimates the vector $\bm\phi$ by a vector $\y$ which is being updated among all workers in each iteration.

However, the mentioned algorithms for decentralized settings over directed graphs require exchanging the agents' model exactly and without any error in order to guarantee convergence to a desired solution. As the model size gets large, e.g., in deep learning algorithms, one can see that communication overhead of each iteration becomes the major bottleneck.  Parameter quantization is the major approach to tackle this issue. Although this approach might increase the overall  number of iterations, the reduction in the communication cost leads to an efficient algorithm for optimization or gossip problems with large model size. In this paper, we exploit the idea of compressing signals for communication to propose the first \textit{quantized algorithms} for gossip and decentralized learning algorithm over \textit{directed graphs}. The main contributions of this paper are summarized below:

\begin{itemize}

    \item We propose the first algorithm for communication-efficient gossip type problems and decentralized stochastic learning over directed graphs. Importantly, our algorithm guarantees converging to the optimal solution. 
    
    \item We prove that our proposed method converges at the same rate as push-sum with exact quantization. In particular for gossip type problems we show convergence in $O(\la^T)$. For stochastic learning problems with convex objectives over a directed network with $n$ nodes, we show that the objective loss converges to the optimal solution with the rate $O(\frac{1}{\sqrt{nT}})$. For non-convex objectives we show that squared norm of the gradient converges with the rate $O(\frac{1}{\sqrt{nT}})$, suggesting convergence to a stationary point with the optimal rate.
    
    \item The proposed algorithms demonstrate significant speed-up for communication time compared to the exact-communication method for gossip and decentralized learning in our experiments.
    
\end{itemize}

\noindent \textbf{Notation.} 
We use boldface notation for vectors and small and large letters for scalars and matrices respectively. $M^T$ denotes the transpose of the matrix $M$. Mean of rows of a matrix $M$, is denoted by $\overline{M}$. We use $[n]$ to denote the set of nodes $\{1,2,\cdots,n\}$. $\|\cdot\|$ denotes the $L_2$ norm of a matrix or vector depending on its argument. The $i$th row of the matrix $M$ is represented by $[M]_i$. The identity matrix of size $d\times d$ is denoted by $\mathbf{I}_d$ and the $d$ dimensional zero vector is denoted by $\mathbf{0}_d$. To simplify notation we represent the $n$-dimensional vector of all $1$'s as $\one$, where $n$ is the number of computing nodes.
\subsection{Prior Work}
\noindent \textbf{Gossip and decentralized optimization.} 
The consensus problems over graphs are generally called Gossip problems \cite{saber2003consensus,xiao2004fast,jadbabaie2003coordination}. In the gossip type problems, the goal of each node is to reach the average of initial values of all nodes over an undirected graph. Over the last few years there have been numerous research works which consider the convergence of decentralized optimization for undirected graphs \cite{nedic2009distributed,yuan2016convergence,shi2015extra,pu2020push}. In particular,
\cite{nedic2009distributed,yuan2016convergence} prove the convergence of decentralized algorithm for convex and Lipschitz functions, while \cite{lian2017can} prove the convergence of stochastic gradient descent for non-convex and smooth loss functions with the rate $O(\frac{1}{\sqrt{nT}})$. \cite{shi2015extra} propose the EXTRA algorithm which using gradient descent converges at a linear rate for strongly-convex losses.

\noindent \textbf{Decentralized learning over directed graphs.}
The first study of push-sum scheme for gossip type problems in directed graphs is discussed in \cite{kempe2003gossip}. Authors in \cite{tsianos2012push} extend this method to decentralized optimization problems and show the convergence of push-sum for convex loss functions. \cite{nedic2014distributed} extend these results to time-varying directed uniformly strongly connected graphs. Convergence of push-sum protocol for non-convex settings and for asynchronous communication are discussed in \cite{assran2018stochastic,assran2020asynchronous}. General algorithms for reaching linear convergence rate in decentralized optimization e.g., EXTRA can be combined with the push-sum algorithm to obtain similar results for strongly-convex objective function in directed graphs. See the following works for interesting discussions regarding this topic  \cite{zeng2015extrapush,xi2017distributed,xi2018linear,xin2018linear, xi2017dextra, xi2015linear, nedic2017achieving}. 

\noindent \textbf{Quantized decentralized learning.} 
Over the last few years there has been a surge of interest in studying communication efficient algorithms for distributed/decentralized optimization \cite{nedic2009distributed_quant, reisizadeh2019robust, alistarh2017qsgd,doan2018accelerating, koloskova2019decentralized,stich2018local,karimireddy2019error,wang2019slowmo,bernstein2018signsgd}. In \cite{nedic2009distributed_quant} authors discuss the effect of quantization in decentralized gossip problems using quantization with constant variance. However they show that these algorithms converge to the average of the initial values of the agents within some error. Authors in \cite{koloskova2019decentralized} propose a quantized algorithm for decentralized optimization over undirected graphs. Here we employ the push-sum protocol to extend this quantized scheme for directed graphs. On a technical level, we build upon and extend the theoretical results of \cite{lian2017can, nedic2014distributed, koloskova2019decentralized}.

\section{Network Model}
We consider a \emph{directed} graph $\mathcal{G} = \{V,E\}$, where $V$ is the set of nodes and $E$ denotes the set of directed edges of the graph. We say there is a link from node $i$ to node $j$ when $(i, j) \in E$. Indeed, as the graph is directed this does not guarantee that there is also a link from $j$ to $i$, i.e., $(j,i) \in E$. The sets of in-neighbors and out-neighbors of node $i$ are defined as:
\begin{align*}
\mathcal{N}_{i}^{\mathrm{in}} := \{j:(j, i) \in E\} \cup\{i\}, \\ \mathcal{N}_{i}^{\rm{out }} := \{j:(i, j) \in E\} \cup\{i\}.
\end{align*}
We denote  $d^{\rm \,out}_{i} := \left|\,\mathcal{N}_{i}^{\rm {\,out }}\right|$ to be the out-degree of node $i$. Throughout this paper we assume that $\mathcal{G}$ is strongly connected.
\begin{ass}[Graph structure]\label{assumption:graph} 
Graph $\mathcal{G}$ of workers is strongly connected. 
\end{ass}
Additionally we assume that the weight matrix has non-negative entries and each node uses its own parameter as well as its in-neighbors. This implies that all vertices of graph $\mathcal{G}$ have self-loops. Also we assume that the weight matrix is column stochastic. 
\begin{ass}[Weight matrix]\label{assumption:matrix}
Matrix $A$ is column stochastic, all entries are non-negative and all entries on the diagonal are positive. 
\end{ass}

Given the above assumptions, we next state a key result from \cite{nedic2014distributed, zeng2015extrapush} that will be useful in our analysis. 

\begin{propo}\label{lem:propo}
Let Assumptions \ref{assumption:graph} and \ref{assumption:matrix} hold and let $A$ be the corresponding weight matrix of workers in a graph $\mathcal{G}$. Then, there exist a stochastic vector $\bm\phi\in\mathbb{R}^n$, and constants $0<\la<1$ and $C>0$ such that for all $t\ge0$: 
\begin{align}\label{eq:matrix}
\quad \Big\|A^t-\bm\phi \mathbf{1}^T\Big\|\le C\la^t.
\end{align}
Moreover there exists a constant $\delta>0$ such that for all $i \in [n]$ and $t\ge1$ 
\begin{align}\label{eq:delta}
\left[A^t\mathbf{1}\right]_i\ge \delta.
\end{align}
\end{propo}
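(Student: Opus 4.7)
The plan is to derive both claims from Perron--Frobenius theory applied to $A$. Under Assumption~\ref{assumption:matrix}, $A$ is column stochastic, so $\mathbf{1}^T A = \mathbf{1}^T$, i.e., $1$ is an eigenvalue with left eigenvector $\mathbf{1}$. Under Assumption~\ref{assumption:graph}, strong connectivity makes $A$ irreducible, and the strictly positive diagonal entries guarantee aperiodicity (self-loops at every vertex). Thus $A$ is primitive. By Perron--Frobenius, there is a unique (up to scaling) right eigenvector $\bm\phi$ with $A\bm\phi = \bm\phi$ whose entries are all strictly positive; I would normalize it so that $\mathbf{1}^T\bm\phi = 1$, which makes $\bm\phi$ a stochastic vector and renders $\bm\phi\mathbf{1}^T$ the spectral projector onto the Perron eigenspace. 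Moreover, all other eigenvalues of $A$ have modulus strictly less than $1$; call this second-largest modulus $\la_2(A)$.

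For the first claim, the key algebraic observation is that $\bm\phi\mathbf{1}^T$ behaves like an idempotent with respect to $A$: since $A\bm\phi = \bm\phi$ and $\mathbf{1}^T A = \mathbf{1}^T$ and $\mathbf{1}^T\bm\phi = 1$, one checks
\begin{align*}
(A-\bm\phi\mathbf{1}^T)^2 &= A^2 - A\bm\phi\mathbf{1}^T - \bm\phi\mathbf{1}^T A + \bm\phi(\mathbf{1}^T\bm\phi)\mathbf{1}^T \\
&= A^2 - \bm\phi\mathbf{1}^T,
\end{align*}
and by induction $(A-\bm\phi\mathbf{1}^T)^t = A^t-\bm\phi\mathbf{1}^T$ for every $t\ge 1$. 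The spectrum of $A-\bm\phi\mathbf{1}^T$ equals that of $A$ with the Perron eigenvalue $1$ replaced by $0$, so its spectral radius is $\la_2(A)<1$. Picking any $\la \in (\la_2(A),1)$, Gelfand's formula (or a Jordan-form bound) provides $C>0$ such that $\|(A-\bm\phi\mathbf{1}^T)^t\|\le C\la^t$ for all $t$, which gives \eqref{eq:matrix}.

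For the second claim, first use~\eqref{eq:matrix} to conclude that $A^t\mathbf{1} \to \bm\phi\mathbf{1}^T\mathbf{1} = n\bm\phi$ geometrically. Since every entry of $\bm\phi$ is strictly positive, let $\delta_\infty := \tfrac{1}{2}\min_i n\phi_i > 0$; choose $T_0$ large enough that $\|A^t\mathbf{1}-n\bm\phi\|_\infty \le \delta_\infty$ for all $t\ge T_0$, so $[A^t\mathbf{1}]_i \ge \delta_\infty$ in that regime. For the finitely many remaining indices $1\le t<T_0$, note that $[A^t]_{ii}\ge [A]_{ii}^t > 0$ (positive diagonal is preserved under self-multiplication since all entries are nonnegative), hence $[A^t\mathbf{1}]_i \ge [A^t]_{ii}>0$. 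Taking the minimum of $\delta_\infty$ with these finitely many positive values yields a uniform constant $\delta>0$ satisfying~\eqref{eq:delta}.

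The only subtle step is the uniform bound $\|(A-\bm\phi\mathbf{1}^T)^t\|\le C\la^t$: if $A$ is not diagonalizable there are polynomial-in-$t$ factors from the Jordan blocks, so one must absorb them by inflating the base slightly above $\la_2(A)$. This is the main technical point, and it is exactly where the strict inequality $\la_2(A)<1$ — guaranteed by primitivity — is needed.
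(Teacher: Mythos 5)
Your proof is correct, but there is nothing in the paper to compare it against line by line: the paper does not prove this proposition at all, stating it as a known result imported from \cite{nedic2014distributed,zeng2015extrapush}. Your argument is therefore a self-contained substitute, and it is the standard spectral proof for a single fixed matrix: primitivity of $A$ (irreducibility from strong connectivity plus aperiodicity from the positive diagonal), the Perron right eigenvector $\bm\phi$ normalized to be stochastic, the deflation identity $(A-\bm\phi\mathbf{1}^T)^t = A^t-\bm\phi\mathbf{1}^T$, and Gelfand's formula to absorb the polynomial Jordan-block factors — a point you correctly flag as the technical crux. By contrast, the cited works establish such bounds for \emph{products of time-varying} column-stochastic matrices via combinatorial weak-ergodicity arguments, which yields explicit constants (e.g.\ bounds of the form $\delta \ge n^{-n}$) and covers uniformly strongly connected graph sequences; your spectral route is cleaner but gives non-explicit $C,\lambda,\delta$ and requires a fixed $A$ — which is all this paper actually uses, so nothing is lost here. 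Two small points to tighten: the identity $(A-\bm\phi\mathbf{1}^T)^t=A^t-\bm\phi\mathbf{1}^T$ holds only for $t\ge1$, so the $t=0$ case of \eqref{eq:matrix} must be covered separately by enlarging $C$ to at least $\bigl\|I-\bm\phi\mathbf{1}^T\bigr\|$; and irreducibility of $A$ needs not just strong connectivity of $\mathcal{G}$ but also $a_{ij}>0$ for every edge $(j,i)\in E$ — Assumption \ref{assumption:matrix} literally requires only nonnegativity off the diagonal, though the surrounding discussion and Example \ref{ex:graph} make clear this positivity is intended, so you share an implicit hypothesis with the paper rather than introducing a gap. Your handling of \eqref{eq:delta} — geometric convergence of $A^t\mathbf{1}$ to $n\bm\phi$ with $\min_i n\phi_i>0$ for the tail, plus $[A^t\mathbf{1}]_i\ge[A^t]_{ii}\ge a_{ii}^t>0$ for the finitely many initial iterations, then taking a minimum — is sound.
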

Note that the column-stochastic property of the weight matrix is considerably weaker than double-stochastic property. As explained in the next example, each computing node $i$ can use its own out-degree to form the $i$'th column of weight-matrix. Thus the weight matrix can be constructed in the decentralized setting without each node knowing $n$ or the structure of the graph. 
\begin{ex}\label{ex:graph}
Consider a strongly connected network of $n$ computing nodes where $a_{ij} = \frac{1}{d_{j}^{\,\rm{out}}}$ for all $i,j \in [n]$, and each node has a self-loop. It is straight-forward to see that $A$ is column stochastic and all entries on the diagonal are positive. Therefore the constructed weight matrix satisfies Assumption \ref{assumption:matrix}. 
\end{ex}

\section{Push-sum for Directed Graphs}
Before explaining our main contributions on quantized decentralized learning, we discuss gossip or consensus algorithms over directed graphs. Consensus algorithms in the decentralized setting are denoted as gossip algorithms. In this problem, workers are exchanging their parameters $\x_i(t)$ at time $t$ over a connected graph. The goal is to reach the average of initial values of all nodes, i.e., $\bar{X}(1)$, at every node, guaranteeing consensus among workers. The gossip algorithm is based on the weighted average  of parameters of neighboring nodes, i.e.,$
 \x_i(t+1) = \sum_{j=1}^n a_{ij} \x_j(t).
$ \cite{xiao2004fast} showed that for doubly stochastic graphs with spectral gap smaller than one, the weight matrix $A$ converges in linear iterations to the average matrix; thus, $ X(t+1) = A^t X(1)$ asymptotically converges to $\frac{\one\one^T}{n}X(1),$ which guarantees convergence to the initial mean with linear rate. The condition on $A$ being column-stochastic guarantees that average of workers is preserved in each iteration, i.e., $
\bar{X}(t) =\bar{X}(t-1) = \cdots = \bar{X}(1)
$. On the other hand, if the weight matrix $A$ is not row-stochastic,
$\x_i(t)$ converges to $\phi_i\bar{X}(1)$, where $\phi_i$ is the $i$th entry of the stochastic vector $\bm\phi$ with the property that $A^t\rightarrow {\bm\phi}\one^T$. The main approach to tackle consensus in directed networks or for non-doubly stochastic weight matrices is the push-sum protocol introduced for the first time in \cite{kempe2003gossip}. In the push-sum algorithm each worker $i$ updates its auxiliary scalar variable $y_i(t)$ according to the following rule: 
\begin{align*}
    y_i(t+1) = \sum_{j\in\,\mathcal{N}^{\rm{in}}_i} a_{ij}y_j(t).
\end{align*}
Note that the matrix $A$ is column stochastic but not necessarily row-stochastic, thus one can see that  if the scalars $y_i$ are initialized with $y_i(1) = 1$, for all $i \in [n]$, then 
$
    Y(t) = A^t Y(1) = A^t \one. 
$
This implies that $
    Y(t) \xrightarrow{t\rightarrow \infty} n\cdot\bm{\phi}.$
Thus for all $i\in[n]$, we have $
    \frac{\x_i(t)}{y_i(t)} \xrightarrow{t\rightarrow\infty} \frac{\phi_i\one^T\cdot X(1)}{n\cdot{\phi_i}} = \bar{X}(1).$
This shows the asymptotic convergence of $\frac{\x_i}{y_i}$ to the initial mean. Since the parameters $\x_i(t)$ and $y_i(t)$ are kept locally at each node, in every iteration each node can obtain its variable $\z_i(t):=\frac{\x_i(t)}{y_i(t)}$ in the decentralized setting.  
Based on this approach, we present a communication-efficient algorithm for Gossip over directed networks which uses quantization for reducing communication time (Section~\ref{sec:goss_qn}). Moreover, we will show exact convergence  with the same rate as the push-sum algorithm without quantization.

\subsection{Extension to decentralized optimization}
As studied in \cite{tsianos2012push,nedic2014distributed} the push-sum method for reaching consensus among nodes can be extended to decentralized convex optimization problems with some modifications. The push-sum algorithm for decentralized optimization with exact communication, can be summarized in the following steps: 
\begin{align*}
\left\{
      \begin{array}{ll}
      \x_i(t+1) &= \sum_{j=1}^n a_{ij}\, \x_j(t) - \alpha(t)\nabla f_i\left(\z_i(t)\right)\\
      y_i(t+1) &= \sum_{j\in \,\mathcal{N}_i^{\rm{in}}} a_{ij}\, y_j(t)\\
      \z_i(t+1) &= \frac{\x_i(t+1)}{y_i(t+1)}
      \end{array}
      \right.
\end{align*}
Here, local gradients $\nabla f_i$ are computed at the scaled parameters $\z_i(t)$, while the parameters $\z_i(t)$ are obtained similar to the gossip push-sum method. It is shown by \cite{nedic2014distributed} that for all nodes $i\in[n]$ and all $T\ge 1$ the iterates of gossip push-sum method satisfy 
\begin{align*}
f\left(\widetilde{\z}_i(T)\right) -  f\left(\z^\star\right) \le O\left(\frac{1}{\sqrt {T}}\right), 
\end{align*}
for $\alpha(t) = O(1/t)$ and  $\widetilde{\z}_i(T)$ denoting the weighted time average of $\z_i(t)$ for $t=1,\cdots,T$.  This result indicates that for column stochastic matrices, the push-sum protocol achieves the optimal solution at a sublinear rate of $O({1}/{\sqrt {T}})$. In the following section, we show that one can obtain the similar complexity bound even for the case that nodes exchange quantized signals.

\section{Quantized Push-sum for Directed Graphs}\label{sec:algorithm}
In this section, we propose two variants of the push-sum method with \textit{quantization} for both gossip (consensus) and decentralized optimization over directed graphs.

\begin{algorithm}[t]
   \caption{Quantized Push-sum for Consensus over Directed Graphs}
   \label{alg:Gossip}
   \begin{algorithmic}
   \FOR {each node $i\in[n]$ and iteration $t\in[T]$}
   \STATE  $Q_i(t) = Q\left(\x_i(t)-\xh_i(t)\right)$
   \FOR {all nodes $k\in\mathcal{N}^{\rm{out}}_{i}$ and  $j \in\mathcal{N}^{\rm{in}}_i$ }
   \STATE send $Q_i(t)$ and $y_i(t)$ to $k$ and receive $Q_j(t)$ and $y_j(t)$ from $j$.
    \STATE  $\xh_j(t+1) = \xh_j(t) + Q_j(t)$
   \ENDFOR  
     \STATE $\x_i(t+1) = \x_i(t) - \xh_i(t+1) + \sum_{j\in\mathcal{N}^{\rm{in}}_i} a_{ij}\xh_j(t+1)$
   \STATE  $y_i(t+1) = \sum_{j\in\mathcal{N}_i^{\rm{in}}} a_{ij}y_j(t)$
   \STATE $\z_i(t+1) = \frac{\x_i(t+1)}{y_i(t+1)}$
   \ENDFOR
\end{algorithmic}
\end{algorithm}

\subsection{Quantized push-sum for Gossip}\label{sec:goss_qn}
We present a quantized gossip algorithm for the consensus problem in which nodes communicate quantized parameters over a directed graph. As previously noted, our method is derived based on the quantized decentralized optimization method \cite{koloskova2019decentralized} and the push-sum algorithm \cite{tsianos2012push,nedic2014distributed}.The steps of our proposed algorithm are described in Algorithm \ref{alg:Gossip}. Basically, Algorithm \ref{alg:Gossip} consists of two parts: First, the ``Quantization'' step, in which each node computes 
\begin{align*}
    Q_i(t):= Q\left(\x_i(t)-\xh_i(t)\right),
\end{align*}
where $\xh_i(t)$ is an auxiliary parameter stored locally at each node and is being updated at each iteration. Importantly every node $i$, communicates $Q_i(t)$ to its out-neighbors. Quantizing and communicating $\x_i(t) - \xh_i(t)$ instead of $\x_i(t)$ is a crucial part of the algorithm as it guarantees that the quantization noise asymptotically vanishes. Second part of the proposed algorithm is the ``Averaging'' step, in which every node $i$ updates in parallel its parameters $\left(\x_i(t),\,y_i(t),\,\z_i(t)\right)$. The variables $\z_i(t)$ and $y_i(t)$ are updated similar to the push-sum algorithm. For updating $\x_i$, the algorithm uses estimates of the value of $\x_j(t)$ of its in-neighbors, denoted by $\xh_j$. Each worker keeps track of the auxiliary parameters of its in-neighbors $\xh_j(t),\,\text{for all}\,j\in\mathcal{N}^{\rm {\,in}}_i$ with updating it with $Q_j(t)$ received from them:
\begin{align*}
\xh_j(t+1) = \xh_j(t) + Q_j(t).
\end{align*}
Using the same initialization for all $\xh_j(t)$ kept locally in all out-neighbors of node $j$, one can see that $\xh_j(t)$ remains the same among all out-neighbors of node $j$ for all iterations $t$. Similar to the push-sum protocol with exact quantization, the role of $y_i(t)$ in Algorithm \ref{alg:Gossip} is to scale the parameters $\x_i(t)$ of all nodes $i$ in order to obtain $\z_i(t)$ which is the estimation of the average of initial values of nodes $\bar{X}(1)$.


\subsection{Quantized push-sum for decentralized optimization}\label{sec:opt_qn}
Using the push-sum method for optimization problems we propose Algorithm \ref{alg:Opt} for communication efficient collaborative optimization over Directed Graphs. Similar to the method described in Algorithm \ref{alg:Gossip}, this method also has the Quantization and Averaging steps, with the difference that the update rule for $\x_i(t)$ is similar to one iteration of the stochastic gradient descent: 
\begin{align*}
    \x_i(t+1) &= \x_i(t) - \xh_i(t+1) + \sum_{j\in\mathcal{N}^{\rm{in}}_i} a_{ij}\xh_j(t+1) \\- &\alpha\nabla F_i(\z_i(t+1),\zeta_{i,t+1}).
\end{align*}
Importantly, we note that stochastic gradients are evaluated at the scaled values $\z_i(t) = \w_i(t)/ y_i(t)$. One can observe that similar to Algorithm \ref{alg:Gossip}, here the role of $\z_i(t)$ is to \emph{correct} the parameters $\x_i(t)$ through scaling with $y_i(t)$.  

\begin{algorithm}[t]
   \caption{Quantized Decentralized SGD over Directed Graphs}
   \label{alg:Opt}
   \begin{algorithmic}
   \FOR {each node $i\in[n]$ and iteration $t\in[T]$}
   \STATE  $Q_i(t) = Q\left(\x_i(t)-\xh_i(t)\right)$
   \FOR {all nodes $k\in\mathcal{N}^{\rm{out}}_{i}$ and  $j \in\mathcal{N}^{\rm{in}}_i$ }
   \STATE send $Q_i(t)$ and $y_i(t)$ to $k$ and receive $Q_j(t)$ and $y_j(t)$ from $j$.
    \STATE  $\xh_j(t+1) = \xh_j(t) + Q_j(t)$
   \ENDFOR  
     \STATE $\w_i(t+1) = \x_i(t) - \xh_i(t+1) + \sum_{j\in\mathcal{N}^{\rm{in}}_i} a_{ij}\xh_j(t+1)$
   \STATE  $y_i(t+1) = \sum_{j\in\mathcal{N}_i^{\rm{in}}} a_{ij}y_j(t)$
   \STATE $\z_i(t+1) = \frac{\w_i(t+1)}{y_i(t+1)}$
   \STATE $\x_i(t+1) = \w_i(t+1) - \alpha\nabla F_i(\z_i(t+1),\zeta_{i,t+1})$
   \ENDFOR
\end{algorithmic}
\end{algorithm}
In Section \ref{sec:analysis}, we show that the locally kept parameters $\z_i(t)$ will reach consensus at the rate $O(\frac{1}{\sqrt{T}})$ for $\alpha = O(\frac{1}{\sqrt{T}})$. Furthermore, with the same step size, the time average of the parameters $\z_i(t)$ for $t=1,\cdots,T$ will converge to the optimal solution for convex losses and it will converge to a stationary point for non-convex losses with the same rate as Decentralized Stochastic Gradient Descent (DSGD) with exact communication. This reveals that quantization and structure of the graph (e.g, directed or undirected) have no effect on the rate of convergence under the proposed algorithm, however, these dependencies on quantization and the structure of graph appear in the terms of the upper bound.

\section{Convergence Analysis}\label{sec:analysis}
In this section, we study convergence properties of our proposed schemes for quantized gossip and decentralized stochastic learning with convex and non-convex objectives. All proofs are deferred to the Appendix. To do so, we first assume the following conditions on the quantization scheme and loss functions are satisfied.

\begin{ass}[Quantization Scheme]\label{assumption:qsgd}
The quantization function $Q:\mathbb{R}^d\rightarrow\mathbb{R}^d$ satisfies for all $ \boldsymbol{x} \in \mathbb{R}^d$ :
\begin{align}\label{eq:quant}
\E\left[\Big\|Q(\boldsymbol{x})-\boldsymbol{x}\Big\|^2\right]\le \omega^2\left\|\boldsymbol{x}\right\|^2,
\end{align}
where $0\le\omega<1$.
\end{ass}
In the following, we mention a specific quantization scheme and formally state its parameter $\omega$.

\begin{ex}[Low-precision Quantizer]\label{ex:qsgd}\cite{alistarh2017qsgd} 
The unbiased stochastic quantization assigns $\xi_i(\boldsymbol{x},s)$  to each entry $x_i$ in $\boldsymbol{x}$, where $s$ is the number of levels used for encoding $x_i$ and 
\begin{equation*}
\xi_{i}(\boldsymbol{x}, s)=\left\{\begin{array}{ll}
 {(\ell+1) / s} & {\text { w.\,p. \; $\frac{\left|x_{i}\right|}{\|\boldsymbol{x}\|_{2}}s-\ell$ }},\\ 
 {\ell / s} & {otherwise. }
\end{array}\right.
\end{equation*}
Here $\ell$ is the integer satisfying  $0\le\ell<s$ and $
\frac{\left|x_{i}\right|} {\|\boldsymbol{x}\|_{2}} \in[\ell / s,(\ell+1) / s]$. The node at the receiving end, recovers the message according to :
\begin{equation*}
Q\left(x_{i}\right)=\|\boldsymbol{x}\|_{2} \cdot \operatorname{sign}\left(x_{i}\right) \cdot \xi_{i}(\boldsymbol{x}, s).
\end{equation*}
This quantization scheme satisfies Assumption \ref{assumption:qsgd} with \begin{equation*}
\omega^2 = \min \left(d / s^{2}, \sqrt{d} / s\right).
\end{equation*}
\end{ex}
For convenience we assume that the parameters $\x_i$ and $\xh_i$ of all nodes are initialized with zero vectors. This assumption is without loss of generality and is taken to simplify the resulting convergence bounds. 

\begin{ass}[Initialization] \label{assumption:initialization}
The parameters $\x_i$ and $\xh_i$ are initialized with $\mathbf{0}_d$ and $y_i(1) =1$ for all $i \in [n]$. 
\end{ass}
Additionally we make the following assumptions on the local objective function of each computing node and its stochastic gradient.
\begin{ass}[Lipschitz Local Gradients]\label{assumption:lipshitz}
Local functions $f_i(\cdot)$, have $L$-Lipschitz  gradients i.e., for all $i\in[n]$
$$\Big\|\nabla f_i(\y)-\nabla f_i(\x)\Big\|\le L\Big\|\y-\x\Big\|,\; \forall \x,\y \in \mathbb{R}^d. $$
\end{ass}

\begin{ass}[Bounded Stochastic Gradients]\label{assumption:boundedgrad} Local stochastic gradients $\nabla F_i(\x,\zeta_i)$ have bounded second moment i.e., for all $i \in [n]$
$$\Exp_{\zeta_i\sim\mathcal{D}_i}\Big\|\nabla F_i(\x,\zeta_i)\Big\|^2\le D^2, \;\forall \x \in \mathbb{R}^d.$$
\end{ass}
\begin{ass}[Bounded Variance]\label{assumption:boundedvar}
Local stochastic gradients have bounded variance i.e., for all $i \in [n]$ 
$$\Exp_{\zeta_i\sim\mathcal{D}_i}\Big\|\nabla F_i(\x,\zeta_i) - \nabla f_i(\x)\Big\|^2\le \sig^2, \;\forall \x \in \mathbb{R}^d.$$
\end{ass}\par
First, we show that in  Algorithm \ref{alg:Gossip}, the parameters $\z_i$ of all nodes recover the exact value of initial mean in linear iterations. For convenience we denote by $\gamma := \|A-I\|$ and $\widetilde{\la}_1:= \frac{1}{2\la^{-1/2}+4C(\la-\la^{3/2})^{-1}}$.
\begin{thm}[Gossip]\label{thm:gossip}
Recall the definitions of $\lambda$ and $\delta$ in Proposition \ref{lem:propo} and Define $\xi_1=\omega(1+\gamma)\|X(1)\|\cdot\max\{\frac{4 C}{\lambda} ,\,\la^{-1/2}\}$. Under Assumptions \ref{assumption:graph}-\ref{assumption:qsgd},  the iterations of Algorithm \ref{alg:Gossip} satisfy for $\omega \le \frac{\widetilde{\la}_1}{1+\gamma}$ , $i\in[n]$  and all $t\ge1:$  
\begin{align}\label{eq:gossip_thm}
\begin{split}
\mathbb{E}\,\bigg\|\z_{i}(t+1)-\frac{1}{n}\sum_{i=1}^n \x_i(1)\bigg\|&\leq \frac{C\xi_1}{\delta(1-\la^{1/2})} \lambda^{t/ 2} +\\
&\frac{2C\|X(1)\|}{\delta} \lambda^{t}.
\end{split}
\end{align}

\end{thm}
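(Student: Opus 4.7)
The plan is to recast Algorithm~\ref{alg:Gossip} in matrix form and then split $\|\z_i(t+1)-\overline{X}(1)\|$ into a deterministic bias and an accumulated quantization-noise term. Collecting the vectors $\x_i(t),\xh_i(t),y_i(t)$ as rows of matrices $X(t),\Xh(t)\in\mathbb{R}^{n\times d}$ and $Y(t)\in\mathbb{R}^{n}$, introduce the quantization-noise matrix $N(t):=Q(t)-(X(t)-\Xh(t))$, which satisfies $\mathbb{E}[\|N(t)\|_F^{2}\mid\mathcal{F}_t]\le\omega^{2}\|X(t)-\Xh(t)\|_F^{2}$ by Assumption~\ref{assumption:qsgd} applied row by row. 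The update $\xh_j(t+1)=\xh_j(t)+Q_j(t)$ produces the key identity $\Xh(t+1)=X(t)+N(t)$, which, substituted into the $\x_i$-update, reduces the algorithm to
\begin{align*}
X(t+1)=A\,X(t)+(A-I)\,N(t),\qquad Y(t+1)=A^{t}\mathbf{1}.
\end{align*}
Iterating yields $X(t+1)=A^{t}X(1)+\Delta(t+1)$ with $\Delta(t+1):=\sum_{s=1}^{t}A^{t-s}(A-I)N(s)$, while column-stochasticity of $A$ preserves the row mean, so $\overline{X}(t)\equiv\overline{X}(1)$. Combining with $y_i(t+1)\ge\delta$ from Proposition~\ref{lem:propo} gives
\begin{align*}
\bigl\|\z_i(t+1)-\overline{X}(1)\bigr\|\le\tfrac{1}{\delta}\Bigl(\bigl\|\bigl([A^{t}]_i-\phi_i\mathbf{1}^\top\bigr)\bigl(X(1)-\mathbf{1}\overline{X}(1)\bigr)\bigr\|+\|[\Delta(t+1)]_i\|\Bigr).
\end{align*}

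The bias term follows almost directly from Proposition~\ref{lem:propo}: $\|[A^{t}]_i-\phi_i\mathbf{1}^\top\|\le C\lambda^{t}$, and $\|X(1)-\mathbf{1}\overline{X}(1)\|\le 2\|X(1)\|$ from the triangle inequality together with $\|\mathbf{1}\overline{X}(1)\|\le\|X(1)\|$, which yields the $\tfrac{2C\|X(1)\|}{\delta}\lambda^{t}$ piece of~\eqref{eq:gossip_thm}. For the noise term, the same proposition applied to $\|A^{t-s}(A-I)\|=\|A^{t-s+1}-A^{t-s}\|$ gives $\|A^{t-s}(A-I)\|\le 2C\lambda^{t-s}$, and a Jensen step reduces the task to controlling
\begin{align*}
\mathbb{E}\|[\Delta(t+1)]_i\|\le 2C\omega\sum_{s=1}^{t}\lambda^{t-s}\sqrt{\mathbb{E}\|E(s)\|_F^{2}},\qquad E(s):=X(s)-\Xh(s).
\end{align*}

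The main obstacle is proving the linear decay $u(s):=\sqrt{\mathbb{E}\|E(s)\|_F^{2}}\le K\lambda^{s/2}$ for a constant $K$ proportional to $\|X(1)\|\cdot\max\{4C/\lambda,\lambda^{-1/2}\}$. I would establish this by strong induction using the unrolled recursion
\begin{align*}
E(s+1)=A^{s-1}(A-I)X(1)+\sum_{r=1}^{s-1}A^{s-1-r}(A-I)^{2}N(r)+(A-2I)N(s),
\end{align*}
together with the norm estimates $\|A^{s-1}(A-I)\|\le 2C\lambda^{s-1}$, $\|A^{s-1-r}(A-I)^{2}\|\le 4C\lambda^{s-1-r}$ (both via $A^{k}=\bm\phi\mathbf{1}^\top+O(\lambda^{k})$ and the triangle inequality), and $\|A-2I\|\le 1+\gamma$. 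Minkowski's inequality then yields the discrete Gronwall-type bound $u(s+1)\le 2C\lambda^{s-1}\|X(1)\|+4C\omega\sum_{r=1}^{s-1}\lambda^{s-1-r}u(r)+(1+\gamma)\omega\,u(s)$. Inserting the inductive hypothesis and evaluating the resulting geometric sum shows that the induction closes precisely when $\omega\bigl[\tfrac{4C}{\lambda(1-\lambda^{1/2})}+\tfrac{1+\gamma}{\lambda^{1/2}}\bigr]<1$, which the hypothesis $\omega\le\widetilde\lambda_1/(1+\gamma)$ guarantees with enough slack to accommodate the base case $u(1)=\|X(1)\|\le K\lambda^{1/2}$. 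Plugging $u(s)\le K\lambda^{s/2}$ back into the noise-term bound and evaluating the geometric sum $\sum_{s=1}^{t}\lambda^{t-s}\lambda^{s/2}\le\lambda^{t/2}/(1-\lambda^{1/2})$ gives $\mathbb{E}\|[\Delta(t+1)]_i\|\le\tfrac{2C\omega K}{1-\lambda^{1/2}}\lambda^{t/2}$; since $\omega K$ is proportional to $\xi_1$ by the choice of $K$, dividing by $\delta$ produces the first term of~\eqref{eq:gossip_thm} and completes the proof.
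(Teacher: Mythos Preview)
Your approach is essentially the paper's: both rewrite Algorithm~\ref{alg:Gossip} in matrix form, iterate $X(t+1)=A X(t)+(A-I)N(t)$ to separate a deterministic bias from accumulated quantization noise, prove geometric decay of the noise-related quantity by strong induction, and then combine with the push-sum scaling $y_i(t+1)\ge\delta$. The only bookkeeping difference is that the paper tracks $U(t)=\E\|\Xh(t+1)-X(t)\|$ (which equals $\E\|N(t)\|$ in your notation) through a coupled pair of recursions with $R(t)=\E\|X(t)-\bm\phi\mathbf{1}^TX(1)\|$, whereas you track the pre-quantization gap $u(s)=\sqrt{\E\|X(s)-\Xh(s)\|^2}$ and unroll completely; these are essentially interchangeable since $U(t)\le\omega\, u(t)$.

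One point to watch: your closing constants do not land exactly on the stated bound. Your induction requires $\omega\bigl[\tfrac{4C}{\lambda(1-\lambda^{1/2})}+\tfrac{1+\gamma}{\lambda^{1/2}}\bigr]\le\tfrac12$, while the theorem's hypothesis $\omega(1+\gamma)\le\widetilde{\lambda}_1$ is equivalent to $\omega\bigl[\tfrac{2C(1+\gamma)}{\lambda(1-\lambda^{1/2})}+\tfrac{1+\gamma}{\lambda^{1/2}}\bigr]\le\tfrac12$; when $\gamma<1$ the latter does not imply the former. Likewise, your noise term comes out as $\tfrac{2C\omega K}{\delta(1-\lambda^{1/2})}\lambda^{t/2}$ with $K=\|X(1)\|\max\{4C/\lambda,\lambda^{-1/2}\}$, whereas the statement has $\tfrac{C\xi_1}{\delta(1-\lambda^{1/2})}\lambda^{t/2}$ with $\xi_1=(1+\gamma)\omega K$, so the two match only when $\gamma=1$. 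The paper sidesteps this by running the induction directly on $U(t)$, which naturally carries the factor $\lambda_1=\omega(1+\gamma)$ and therefore the exact $\xi_1$; your route is sound but, as written, would prove the theorem with slightly different constants.
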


This bound signifies the effect of parameters related to the structure of graph and weight matrix, i.e. $\la,\,C,\,\delta \,\,\text{and} \,\,\gamma$ and the quantization parameter $\omega$. In particular $\omega$ emerges as the coefficient of the larger term, and choosing $\omega = 0$ which corresponds to gossip with exact communication, results in convergence with rate $O(\la^t)$.

Next, we show that the quantization method for decentralized stochastic learning over directed graphs as described in Algorithm \ref{alg:Opt} converges to the optimal solution with optimal rates for convex objectives. In particular we show that global objective function evaluated at time average of $\z_i(t)$ converges to the optimal solution after $T$ iterations with the rate $O(\frac{1}{\sqrt{nT}})$.  The next theorem characterizes the convergence bound for Algorithm \ref{alg:Opt} with convex objectives.
For compactness we define constants $\widetilde{\la}_2 := (1+\frac{6C^2}{(1-\la)^2})^{-1/2}$ and $\xi := 6n D^2(1+\gamma^2)(1+\frac{6C^2}{(1-\la)^2}).$
\begin{thm}[Convex Objectives]\label{thm:convex}
Assume local functions $f_i(\cdot)$ for all $i\in[n]$ are convex, then under Assumptions \ref{assumption:graph}-\ref{assumption:boundedvar}  Algorithm \ref{alg:Opt} for $\omega \le \frac{\widetilde{\la}_2}{\sqrt{6(1+\gamma^2)}}$, $\al = \frac{\sqrt{n}}{8L\sqrt{T}}$ and $T\ge1$ satisfies for all $i\in[n]$ $:$ 
\begin{align}\label{eq:convex_thm} 
\begin{split}
&\E f\left(\frac{1}{T}\sum_{t=1}^T\z_i(t+1)\right) - f(\z^\star)\le
 \\ &  \frac{8L(L+1)}{\sqrt{nT}}\left \|  \z^\star \right\|^2 + 
 \frac{\sig^2(L+1)}{4\,L\sqrt{nT}}+ \\ &\frac{nC^2  \left(L+1\right)\left(\frac{L\sqrt{n}}{2\sqrt{T}}+L+1\right) \left(\xi\omega^2+2nD^2\right)}{10\,\delta^2(1-\la)^2L^2 \,T }.
\end{split}
\end{align}
\end{thm}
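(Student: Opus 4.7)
The plan is to follow the standard two-part template for decentralized convex stochastic optimization: first derive a perturbed-SGD recursion for the mean iterate, then bound the consensus error between each scaled local iterate $\z_i(t)$ and the mean, and finally combine the two with an appropriate step-size choice. The key novelty is handling the directed-graph push-sum coupling and the multiplicative quantization error simultaneously, both of which concentrate in the consensus step.

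\textbf{Step 1: SGD recursion for the mean iterate.} Define $\bar{\x}(t):=\tfrac{1}{n}\sum_{i=1}^n \x_i(t)$. Column-stochasticity of $A$ together with the cancellation $-\xh_i(t+1)+\sum_j a_{ij}\xh_j(t+1)$ summed over $i$ gives $\bar{\w}(t+1)=\bar{\x}(t)$, and Proposition~\ref{lem:propo} with $y_i(1)=1$ guarantees $y_i(t)\ge\delta$ for all $t\ge1$. Combining these,
\begin{equation*}
\bar{\x}(t+1) \;=\; \bar{\x}(t) - \frac{\alpha}{n}\sum_{i=1}^n \nabla F_i\bigl(\z_i(t+1),\zeta_{i,t+1}\bigr).
\end{equation*}
Expanding $\|\bar{\x}(t+1)-\z^\star\|^2$, taking conditional expectation (the variance of the averaged stochastic gradient is at most $\sigma^2/n$), writing $\nabla f_i(\z_i(t+1))=\nabla f_i(\bar{\x}(t))+\bigl(\nabla f_i(\z_i(t+1))-\nabla f_i(\bar{\x}(t))\bigr)$, and applying convexity of each $f_i$ together with $L$-smoothness to the residual yields schematically
\begin{equation*}
\E\|\bar{\x}(t+1)-\z^\star\|^2 \le \E\|\bar{\x}(t)-\z^\star\|^2 - \tfrac{2\alpha}{n}\sum_i \E\bigl[f_i(\z_i(t+1))-f_i(\z^\star)\bigr] + \tfrac{\alpha L}{n}\sum_i \E\|\z_i(t+1)-\bar{\x}(t)\|^2 + \tfrac{\alpha^2\sigma^2}{n}+\alpha^2 D^2.
\end{equation*}

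\textbf{Step 2: consensus bound.} This is the heart of the argument. Writing the update in matrix form, $W(t+1)=AX(t)+(A-I)\bigl(\widehat{X}(t+1)-X(t)\bigr)$, and using Assumption~\ref{assumption:qsgd} to control $\E\|\widehat{X}(t+1)-X(t)\|^2\le\omega^2\|X(t)-\widehat{X}(t)\|^2$, the deviation $\w_i(t+1)-y_i(t+1)\bar{\x}(t)$ satisfies a linear recursion driven by (i) the mixing residual $A^k-\bm\phi\one^T$ bounded in norm by $C\la^k$ (Proposition~\ref{lem:propo}), (ii) accumulated quantization noise scaled by $\omega(1+\gamma)$, and (iii) the stochastic-gradient injections, each bounded in second moment by $D^2$ (Assumption~\ref{assumption:boundedgrad}). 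Unrolling the recursion and invoking the lower bound $y_i(t+1)\ge\delta$ to pass from $\w_i$ to $\z_i$, the resulting geometric series converges exactly when the quantization--mixing product is strictly contractive, which is the content of the hypothesis $\omega\le\widetilde{\la}_2/\sqrt{6(1+\gamma^2)}$. The output is a bound of the form
\begin{equation*}
\frac{1}{n}\sum_{i=1}^n \E\|\z_i(t+1)-\bar{\x}(t)\|^2 \;\le\; \frac{\alpha^2\, C^2\,(\xi\,\omega^2 + 2nD^2)}{\delta^2(1-\la)^2\, n}.
\end{equation*}

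\textbf{Step 3: telescoping and step-size tuning.} Summing the Step~1 inequality for $t=1,\ldots,T$, using Assumption~\ref{assumption:initialization} to set $\bar{\x}(1)=\mathbf{0}$, inserting the Step~2 bound, dividing by $\alpha T$, and applying Jensen's inequality for the convex $f$ yields a bound on $\E f(\tfrac{1}{T}\sum_{t}\bar{\z}(t+1)) - f(\z^\star)$ with three contributions: $\|\z^\star\|^2/(\alpha T)$, $\alpha\sigma^2/n$, and a consensus-induced $O(\alpha^2)$ term. Passing from $\bar{\z}(t+1)=\tfrac{1}{n}\sum_i\z_i(t+1)$ to a single node's time average $\tfrac{1}{T}\sum_t \z_i(t+1)$ is done using convexity and $L$-smoothness (with the consensus bound absorbing the extra gap); this accounts for the factor $(L+1)$ appearing in each of the three terms of the claim. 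Finally, $\alpha=\sqrt{n}/(8L\sqrt{T})$ balances the first two contributions at rate $O(1/\sqrt{nT})$ while the consensus contribution, being quadratic in $\alpha$, decays at the faster rate $O(1/T)$, producing exactly the three terms in \eqref{eq:convex_thm}.

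\textbf{Main obstacle.} Step~2 is where all the difficulty lives. In contrast to the undirected doubly-stochastic case, the push-sum coupling forces one to carry the multiplicative factor $y_i(t+1)$ inside the recursion and deploy the uniform lower bound $\delta$ only at the very end; in contrast to the exact-communication case, the quantization error is \emph{multiplicative} in $\x_i(t)-\xh_i(t)$, so it has to be bounded recursively by past iterates rather than by an absolute constant. Reconciling these two forces is what pins down the constraint on $\omega$ through the delicate constant $\widetilde{\la}_2$ (involving $\la$, $C$, and $\gamma$); once this contraction is established, Steps~1 and~3 are essentially standard.
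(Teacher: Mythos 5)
Your overall architecture---a mean-iterate descent inequality, a coupled quantization/consensus recursion whose geometric series is controlled by the contraction condition on $\omega$, and telescoping plus Jensen plus a node-transfer step producing the $(L+1)$ factor---is exactly the paper's route, and your Step 2 narrative matches the paper's Lemmas showing $U(t)\le\xi_2\al^2$ and consensus error $O(\al^2)$. However, there is a genuine gap in your Step 1 that breaks the claimed rate. You bound the squared norm of the averaged stochastic gradient by its variance plus the crude second-moment bound, leaving a persistent $+\,\al^2D^2$ on the right-hand side of your descent inequality. After summing over $t=1,\dots,T$ and dividing by roughly $2\al T$, this term contributes $O(\al D^2)$, which with $\al=\sqrt{n}/(8L\sqrt{T})$ is of order $\sqrt{n/T}$: it does not appear in \eqref{eq:convex_thm}, it dominates the claimed $O(1/\sqrt{nT})$ leading terms, and it scales the wrong way with $n$, destroying the linear speedup. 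The paper avoids Assumption \ref{assumption:boundedgrad} entirely at this point: it decomposes
\begin{align*}
\E\left\|\frac{1}{n}\sum_{i=1}^n\nabla f_i(\z_i(t+1))\right\|^2 \le \frac{2L^2}{n}\sum_{i=1}^n\E\Big\|\z_i(t+1)-\bar{X}(t)\Big\|^2 + \frac{4L}{n}\left(\E f(\bar{X}(t)) - f(\z^\star)\right),
\end{align*}
using smoothness together with $\|\nabla f(\x)\|^2\le 2L\left(f(\x)-f(\z^\star)\right)$, and then absorbs the suboptimality-gap term into the negative descent coefficient, turning $2\al$ into $2\al-\frac{8\al^2L}{n}$ (Lemma \ref{lem:cons_to_f}). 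This self-bounding trick is precisely what makes the dominant terms depend only on $\sig^2$ and $\|\z^\star\|^2$, and it is where the specific factor $8L$ in the step size and the bound $(1-\frac{4\al L}{n})^{-1}\le 2$ originate; without it your Step 3 cannot produce the three terms of the theorem.

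A secondary inaccuracy: your Step 2 output bounds only the node-average of the consensus error and carries an extra $1/n$ that the paper's per-node Lemma \ref{lem:consensus_err} does not have; the correct per-node bound is $\frac{6C^2\al^2}{\delta^2(1-\la)^2}\left(\xi\omega^2+2nD^2\right)$, i.e.\ $O(\al^2 nD^2)$, which is exactly what yields the $O(n^2D^2/T)$ third term after substituting $\al$. Moreover, your Step 3 transfer to a single node's time average needs $\sum_t\E\|\z_i(t+1)-\bar{X}(t)\|^2$ for that \emph{specific} node $i$, so an averaged-over-nodes bound is insufficient as stated---though the recursion you describe does deliver the per-node version with no extra work.
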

\begin{remark}\label{rem:convex}
In the proof of the theorem we show that (see \eqref{eq:convexfinal}) for fixed arbitrary $\alpha$ the error decays with the rate $O(\frac{1}{\al T}) + O(\al^2n) + O(\frac{\al}{n})$. The inequality in the statement of the theorem follows by the specified choice of $\alpha$. More importantly, we highlight that the largest term  in \eqref{eq:convex_thm}, i.e., $\frac{1}{\sqrt{T}}$, is directly proportionate to $\frac{1}{\sqrt{n}}$ and $\sig^2$ which emphasizes the impact of the number of workers and mini-batch size in accelerating the speed of convergence. Additionally parameters related to the structure of graph, i.e., $\la, C, $ and $\delta$ and the quantization parameter $\omega$, only appear in the terms of order $\frac{1}{T}$ and $\frac{1}{T\sqrt{T}}$ which are asymptotically negligible compared to $\frac{1}{\sqrt{T}}$.
\end{remark}
In the next theorem we show convergence of Algorithm \ref{alg:Opt} with non-convex objectives. Importantly, we demonstrate that the gradient of global function converges to the zero vector with the same optimal rate as in decentralized optimization with exact-communication over undirected graphs(See \cite{lian2017can}).
%
%
\begin{thm}[Non-convex Objectives]\label{thm:nonconvex}
Under Assumptions \ref{assumption:graph}-\ref{assumption:boundedvar}, Algorithm \ref{alg:Opt} after sufficiently large iterations $(T\ge4n)$, and for $\omega \le \frac{\widetilde{\la}_2}{\sqrt{6(1+\gamma^2)}}$ and $\al = \frac{\sqrt{n}}{L\sqrt{T}}$ satisfies:
\begin{align}\label{eq:nonconvex_thm}
\begin{split}
&\frac{1}{T}\sum_{t=1}^T \,\mathbb{E}\left\|\nabla f\left(\frac{1}{n}\sum_{i=1}^n \x_i(t)\right)\right\|^{2}+\\ &\frac{1}{2T}\sum_{t=1}^{T} \,\mathbb{E}\left\|\frac{1}{n}\sum_{i=1}^n \nabla f_i(\z_i(t+1))\right\|^{2} \le \frac{\sig^2L}{\sqrt{nT}} + \\
&\frac{2L\left(f\left(0\right)-f^\star\right)}{\sqrt{nT}}+ \frac{12C^2}{\delta^2(1-\la)^2T}\left(\xi \omega^2+2nD^2\right).
\end{split}
\end{align}

Moreover for all $i \in[n]$ and $T\ge1$, it holds that
\begin{align}\label{eq:consen_thm}
\begin{split}
\mathbb{E}\bigg\|\z_{i}(T+1)-&\frac{1}{n}\sum_{i=1}^n \x_i(T)\bigg\|^{2} \le\\& \frac{6C^2n}{\delta^2(1-\la)^2L^2T}\left(\xi\omega^2+2nD^2\right).
\end{split}
\end{align}
\end{thm}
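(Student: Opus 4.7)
The plan is to follow the standard non-convex decentralized SGD template, treating the averaged iterate $\bar{X}(t) := \frac{1}{n}\sum_i \x_i(t)$ as a "virtual" perturbed-SGD sequence and reducing everything to a consensus bound on $\z_i(t+1) - \bar{X}(t)$. A first observation is that because $A$ is column-stochastic, averaging the $\w$-update over $i$ telescopes the quantization bookkeeping terms, yielding $\bar{W}(t+1) = \bar{X}(t)$, hence
\begin{equation*}
\bar{X}(t+1) = \bar{X}(t) - \frac{\al}{n}\sum_{i=1}^n \nabla F_i(\z_i(t+1),\zeta_{i,t+1}).
\end{equation*}
From here, $L$-smoothness of $f$ plus the bounded-variance Assumption \ref{assumption:boundedvar} gives a one-step descent inequality of the form
\begin{equation*}
\E f(\bar{X}(t+1)) \le \E f(\bar{X}(t)) - \tfrac{\al}{2}\E\|\nabla f(\bar{X}(t))\|^2 - \tfrac{\al}{2}\E\|\tfrac{1}{n}\sum_i\nabla f_i(\z_i(t+1))\|^2 + \tfrac{\al L^2}{n}\sum_i\E\|\z_i(t+1)-\bar{X}(t)\|^2 + \tfrac{\al^2 L\sig^2}{2n}.
\end{equation*}
Telescoping from $t=1$ to $T$, rearranging, and using $f(\bar{X}(T+1))\ge f^\star$ produces the left-hand side of \eqref{eq:nonconvex_thm} bounded by $\frac{2(f(0)-f^\star)}{\al T} + \frac{\al L\sig^2}{n}$ plus $2L^2$ times the time-averaged consensus error.

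The bulk of the work is therefore to control $\frac{1}{T}\sum_t \frac{1}{n}\sum_i \E\|\z_i(t+1) - \bar{X}(t)\|^2$, and a sharp single-time bound on it at $t=T$ to get \eqref{eq:consen_thm}. Since Proposition \ref{lem:propo} gives $y_i(t+1) \ge \delta$, we have $\|\z_i(t+1) - \bar{X}(t)\| \le \delta^{-1}\|\w_i(t+1) - y_i(t+1)\bar{X}(t)\|$, so it suffices to analyze $\W(t+1) - y(t+1)\bar{X}(t)^T$. Unrolling the $\w$-recursion and using that $\xh_j$ is the accumulated quantized version of $\x_j$ (the key error-feedback identity that also underlies Algorithm \ref{alg:Gossip} and Theorem \ref{thm:gossip}), we get a coupled recursion between (i) the consensus residual and (ii) the "quantization backlog" $\|\X(t) - \Xh(t)\|^2$. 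Applying Assumption \ref{assumption:qsgd}, $\|A^t - \bm\phi\one^T\|\le C\la^t$, and the gradient-moment bound $D^2$ gives a two-variable Lyapunov inequality; the step-size condition $\omega\le \widetilde{\la}_2/\sqrt{6(1+\gamma^2)}$ is exactly what makes this system contractive and produces the $\frac{1}{(1-\la)^2}$ and $\xi\omega^2 + 2nD^2$ factors appearing in the statement.

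Once that Lyapunov analysis yields $\frac{1}{T}\sum_t \frac{1}{n}\sum_i \E\|\z_i(t+1)-\bar{X}(t)\|^2 = O\!\left(\frac{\al^2 C^2(\xi\omega^2 + 2nD^2)}{\delta^2(1-\la)^2}\right)$ and an analogous non-averaged bound at $t=T$, plugging in $\al = \sqrt{n}/(L\sqrt{T})$ immediately converts the three pieces of the telescoped inequality into $\frac{2L(f(0)-f^\star)}{\sqrt{nT}}$, $\frac{\sig^2 L}{\sqrt{nT}}$, and a residual consensus term of order $\frac{C^2}{\delta^2(1-\la)^2 T}(\xi\omega^2 + 2nD^2)$, matching \eqref{eq:nonconvex_thm}; the same substitution in the per-$t$ consensus bound gives \eqref{eq:consen_thm} with the prefactor $\frac{6C^2 n}{\delta^2(1-\la)^2 L^2 T}$. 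The hypothesis $T\ge 4n$ is used to absorb lower-order $\al^2$ cross-terms (for instance from the $\|\nabla f(\bar{X}(t))\|^2$ absorption) into the leading $1/\sqrt{nT}$ terms.

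The main obstacle is the Lyapunov step: setting up a joint recursion on the consensus residual and the quantization backlog in the directed/push-sum setting, and verifying that the contraction constant stays below one under the stated $\omega$-condition. This requires combining the spectral decay $\|A^t - \bm\phi\one^T\|\le C\la^t$ with the quantization contraction $\omega<1$ while preserving a clean $\frac{1}{(1-\la)^2}$ dependence; the push-sum normalization $y_i \ge \delta$ enters only at the very end, to lift the bound on $\w_i - y_i\bar{X}$ to a bound on $\z_i - \bar{X}$. Everything else (the descent step, telescoping, and substitution of $\al$) is routine given this consensus bound.
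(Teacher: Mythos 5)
Your proposal is correct and follows essentially the same route as the paper's proof: the same column-stochasticity identity $\bar{X}(t+1)=\bar{X}(t)-\frac{\al}{n}\sum_{i=1}^n\nabla F_i(\z_i(t+1),\zeta_{i,t+1})$, the same smoothness-plus-variance descent step, the same coupled recursion between the consensus residual and the quantization backlog $U(t)=\E\|X(t)-\Xh(t+1)\|^2$ resolved by induction under the stated $\omega$-condition (the paper's Lemmas \ref{lem:iteratesnonconvex} and \ref{lem:consensus_err}), the same lift via $y_i(t+1)=[A^t\one]_i\ge\delta$, and the same substitution $\al=\sqrt{n}/(L\sqrt{T})$. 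One small bookkeeping imprecision: your one-step inequality should carry the coefficient $\frac{\al^2L-\al}{2}$ (not $-\frac{\al}{2}$) on $\E\big\|\frac{1}{n}\sum_{i=1}^n\nabla f_i(\z_i(t+1))\big\|^2$, and it is precisely this term---rather than the $\|\nabla f(\bar{X}(t))\|^2$ one---that the hypothesis $T\ge4n$ (equivalently $\al L\le\frac{1}{2}$) keeps negative with magnitude at least $\frac{\al}{4}$, which is where the $\frac{1}{2T}$ factor in \eqref{eq:nonconvex_thm} originates.
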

\begin{remark}
The inequality \eqref{eq:nonconvex_thm} implies convergence of the average of $\x_i(t)$ among workers to a stationary point as well as convergence of average of local gradients $\nabla f_i(\z_i(t))$ to zero with the optimal rate $O(\frac{1}{\sqrt{T}})$ for non-convex objectives. Interestingly similar to the convex-case discussed in Remark~\ref{rem:convex}, the number of workers $n$ and  stochastic gradient variance $\sig^2$ emerge in the dominant terms while the parameters related to weight matrix, graph structure and quantization appear in the term of order $O(\frac{1}{T})$. 
\end{remark}
\begin{remark}
As the proof of theorem shows, for arbitrary fixed step size $\alpha$ we derive the inequality in \eqref{eq:nonconvex_fixedal} and the desired result of theorem is concluded by the specified choice of $\alpha$ in the theorem. Importantly we note that the condition on the number of iterations ,i.e. $T\ge4n$, is a direct result of the specified choice of $\alpha$. Therefore one can get the same rate for convergence for all $T\ge1$, with other choices for the step size. For example using the relation \eqref{eq:nonconvex_fixedal}, by choosing $\alpha = \frac{L}{2\sqrt{T}}$ we achieve convergence for all $T\ge1.$
\end{remark}
\begin{remark}
Based on \eqref{eq:consen_thm}, the parameters $\z_i$ of nodes reach consensus with the rate $O(\frac{1}{T})$ for the specified value of $\alpha$. For an arbitrary value of $\alpha$, consensus is achieved with the rate $O(\alpha^2)$ (see \eqref{eq:cons_error}) which implies that smaller values of $\alpha$ will result in faster consensus among the nodes. However due to the term $O(\frac{1}{\al T})$ in the convergence of objective to the optimal solution or in the convergence to a stationary point, this fast consensus will be at the cost of slower convergence of objective function in both convex and non-convex cases.
\end{remark}
\section{Numerical Experiments}\label{sec:numerical}
In this section, we compare the proposed methods for communication-efficient message passing over directed graphs, with the push-sum protocol using exact communication (e.g., as formulated in  \cite{kempe2003gossip,tsianos2012push} for gossip or optimization problems). Throughout the experiments we use the strongly-connected directed graphs $\mathcal{G}_1$ and $\mathcal{G}_2$ with $10$ vertices as illustrated in Figure \ref{fig:graphs}. For each graph we form the column-stochastic weight matrix according to the method described in Example \ref{ex:graph}. In order to study the effect of graph structure we consider $\mathcal{G}_2$ to be more dense with more connection between nodes. For quantization, we use the method discussed in Example \ref{ex:qsgd} with $B = log(s) +1 $ bits used for each entry of the vector (one bit is allocated for the sign). Moreover the norm of transmitted vector and the scalars $y_i$ are transmitted without quantization. In the push-sum protocol with exact communication the entries of the vector $\x_i$ and the scalar $y_i$ are each transmitted with 54 bits. 

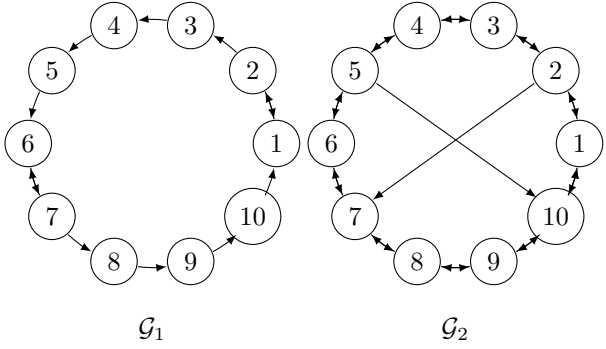
\begin{figure} 
\begin{center}
\centerline{
\begin{tikzpicture}
\def \n {10}
\def \radius {1.65cm}
\def \margin {11.3} 
\foreach \s in {1,...,\n}
{
  \node[draw, circle] at ({360/\n * (\s - 1)}:\radius) {$\s$};
  \draw[->, >=latex] ({360/\n * (\s - 1)+\margin}:\radius) 
    arc ({360/\n * (\s - 1)+\margin}:{360/\n * (\s)-\margin}:\radius);
} 
\foreach \s in {2,7}
    {
    \draw[->, >=latex] ({360/\n * (\s-1 )-\margin}:\radius) 
    arc ({360/\n * (\s-1)-\margin}:{360/\n * (\s-2)+\margin}:\radius);
    }
    \node[below] at (0,-2.2) {$\mathcal{G}_1$};
    \end{tikzpicture}\;\begin{tikzpicture}
\def \n {10}
\def \radius {1.65cm}
\def \margin {11.3} 
\foreach \s in {1,...,10}
{
  \node[draw, circle] at ({360/\n * (\s - 1)}:\radius) {$\s$};
  \draw[->, >=latex] ({360/\n * (\s - 1)+\margin}:\radius) 
    arc ({360/\n * (\s - 1)+\margin}:{360/\n * (\s)-\margin}:\radius);
}
  \draw[->, >=latex] ({360/\n * (9)+\margin}:\radius) 
    arc ({360/\n * (9)+\margin}:{360/\n * (10)-\margin}:\radius);
\foreach \s in {1,...,\n}
    {
    \draw[->, >=latex] ({360/\n * (\s-1 )-\margin}:\radius) 
    arc ({360/\n * (\s-1)-\margin}:{360/\n * (\s-2)+\margin}:\radius);
    }
    \draw[->, >=latex] (1.05,0.8)--(-1.1,-0.75) ;
    \draw[->, >=latex] (-1.05,0.8)--(1.08,-0.72) ;
    \node[below] at (0,-2.2) {$\mathcal{G}_2$};
    \end{tikzpicture}}
    \caption{The experimented directed graphs representing communication between computing nodes}
    \label{fig:graphs}
    \end{center}
\end{figure} 

\noindent \textbf{Gossip experiments.}
First, we evaluate performance of Algorithm \ref{thm:gossip} for gossip type problems. We initialize the parameters $\x_i(1) \in [0,1]^{1024}$ of all nodes to be $i.i.d.$ uniformly distributed random variables. Moreover we initialize the auxiliary parameters $\xh_i = \mathbf{0}$ and $y_i=1$ for all $i\in[n]$. In Figure \ref{fig:gossip}(Top) we compare the performance of Algorithm \ref{alg:Gossip} with the push-sum protocol with exact-communication for both networks $\mathcal{G}_1$ and $\mathcal{G}_2$. While Algorithm \ref{alg:Gossip} has almost the same performance as push-sum over $\mathcal{G}_1$, it is outperformed by push-sum over $\mathcal{G}_2$.

\begin{figure}[t] 
\centering 
\subfloat{%
  \includegraphics[clip,width=0.78\columnwidth,height=0.5\columnwidth]{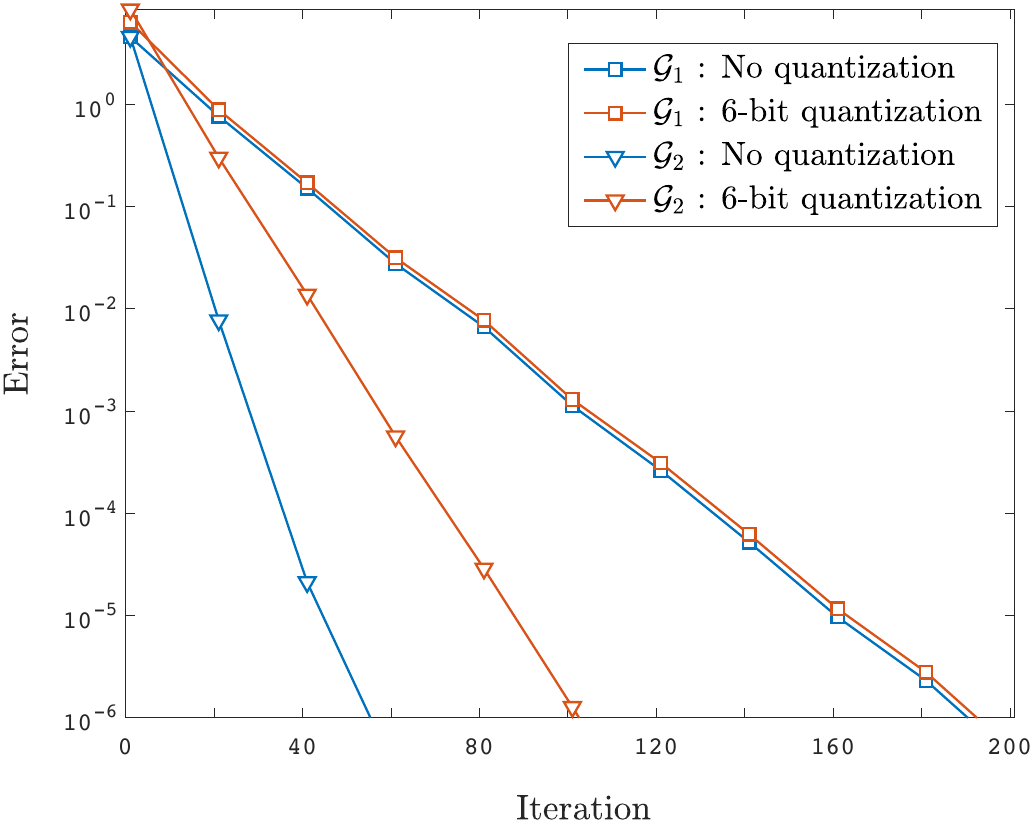}%
}

\subfloat{%
  \includegraphics[clip,width=0.78\columnwidth,height=0.5\columnwidth]{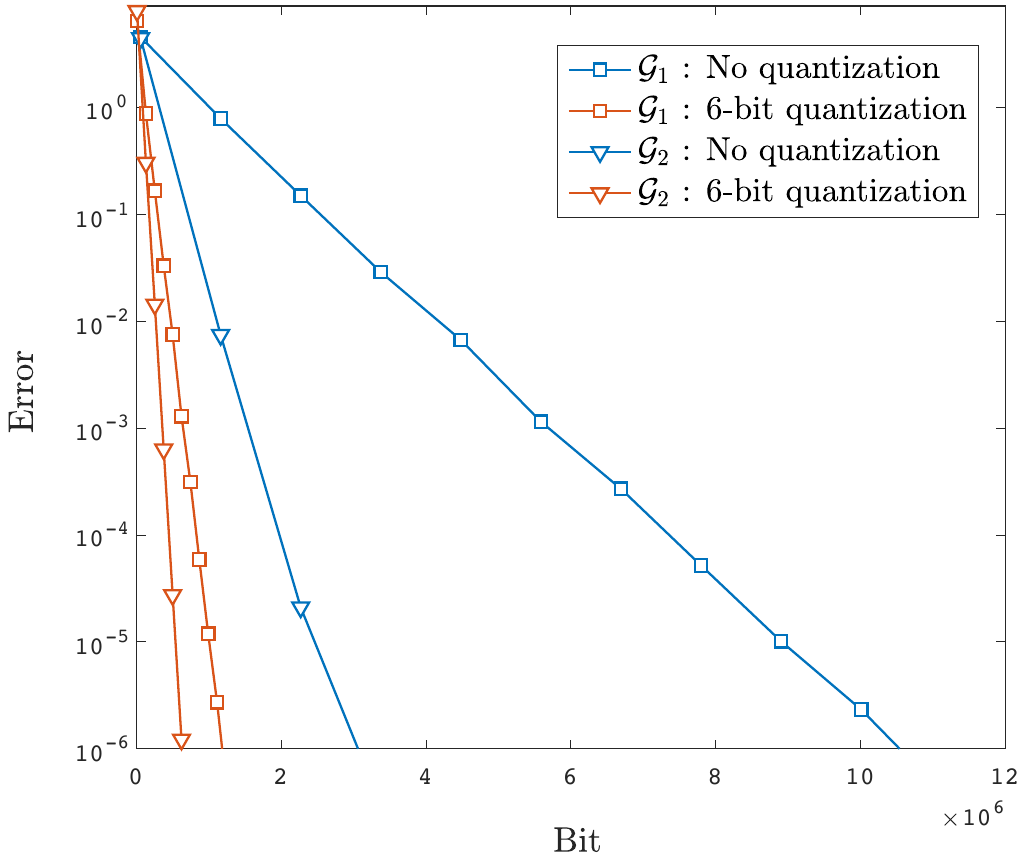}%
}
\caption{Comparison of the proposed algorithm for the gossip problem and the push-sum protocol using exact-communication based on iteration (Top) and total number of bits communicated between two neighbor nodes (Bottom) over the graphs $\mathcal{G}_1$ and $\mathcal{G}_2$.}
    \label{fig:gossip}
    \end{figure}
 
However the superiority of exact-communication methods compared in each iteration could be predicted. In order to compare the two methods based on time spent to reach a specific level of error, we compare their performances based on the number of bits that each worker communicates.
In Figure \ref{fig:gossip} (Bottom) the number of bits required to reach a certain error performance is illustrated for both methods. For the graphs $\mathcal{G}_1$ and $\mathcal{G}_2$ we observe up to 10x and 6x reduction in the total number of bits, respectively. 

\noindent \textbf{Decentralized optimization experiments.} Next, we study the performance of Algorithm \ref{alg:Opt} for decentralized stochastic optimization using convex and non-convex objective functions. First, we consider the objective
\begin{align*}
    f(\x) = \frac{1}{2nm}\sum_{i=1}^n\sum_{j=1}^m \left\|\x - \zeta_j^i\right\|^2,
\end{align*}
where we set $m=n=10$ and $d = 256$. Thus each node $i$ has access to its local data-set $\{\zeta_1^i,\zeta_2^i,\cdots,\zeta_{10}^i\}$ and is using one sample at random in  each iteration to obtain the stochastic gradient of its own local function. Here we use the data-set generated according to 
\begin{align*}
    \zeta_j^i \overset{i.i.d.}{\sim} \zeta^\star + \mathcal{N}\left(\mathbf{0},\mathbf{I}_{\,256}\right), \quad\text{for all}\; i,j,
\end{align*}
where $\zeta^\star$ is a fixed vector initialized as $\mathrm{Uniform} [0,100]^{256}$. The step size $\alpha$ for each setting, is fine-tuned up to iteration $50$ among $20$ values in the interval $[0.01,\, 3]$. The Loss at iteration $T$ is presented by $\frac{1}{d}\|\widetilde{\z}_1(T) - \zeta_{\,\rm opt} \|$, where $\widetilde{\z}_1(T) := \frac{1}{T}\sum_{t=1}^T\widetilde{\z}_1(t)$ is the time-average of the model of worker $1$ and $\zeta_{\rm{\,opt}}$ is the optimal solution.

\begin{figure} 
\centering
\subfloat{%
  \includegraphics[clip,width=0.78\columnwidth,height=0.52\columnwidth]{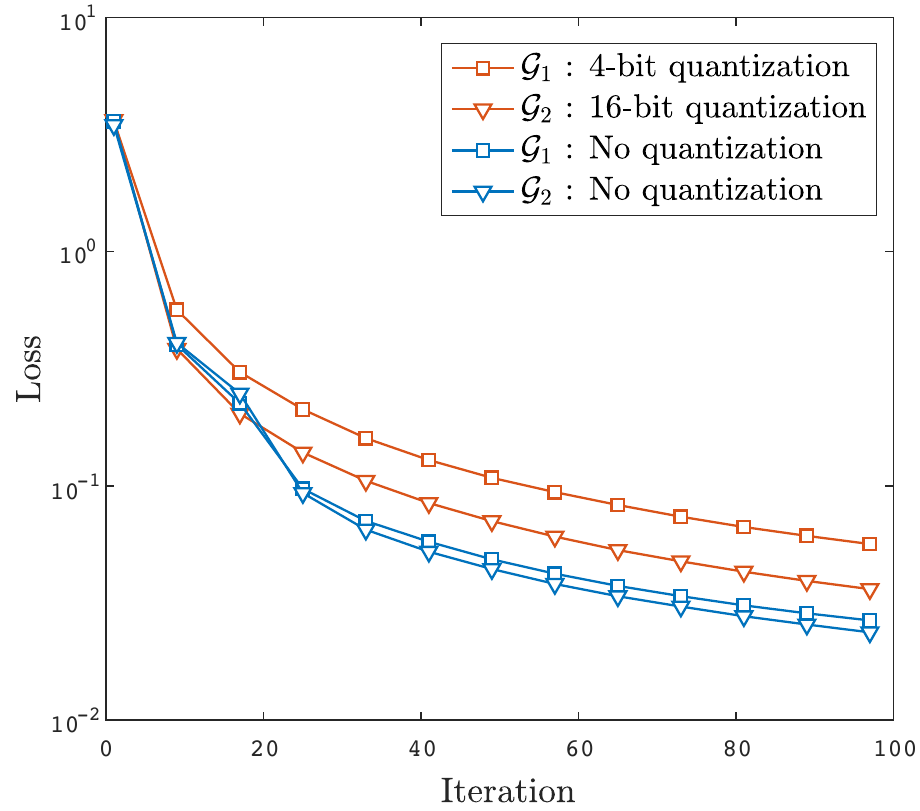}%
}

\subfloat{%
  \includegraphics[clip,width=0.78\columnwidth,height=0.52\columnwidth]{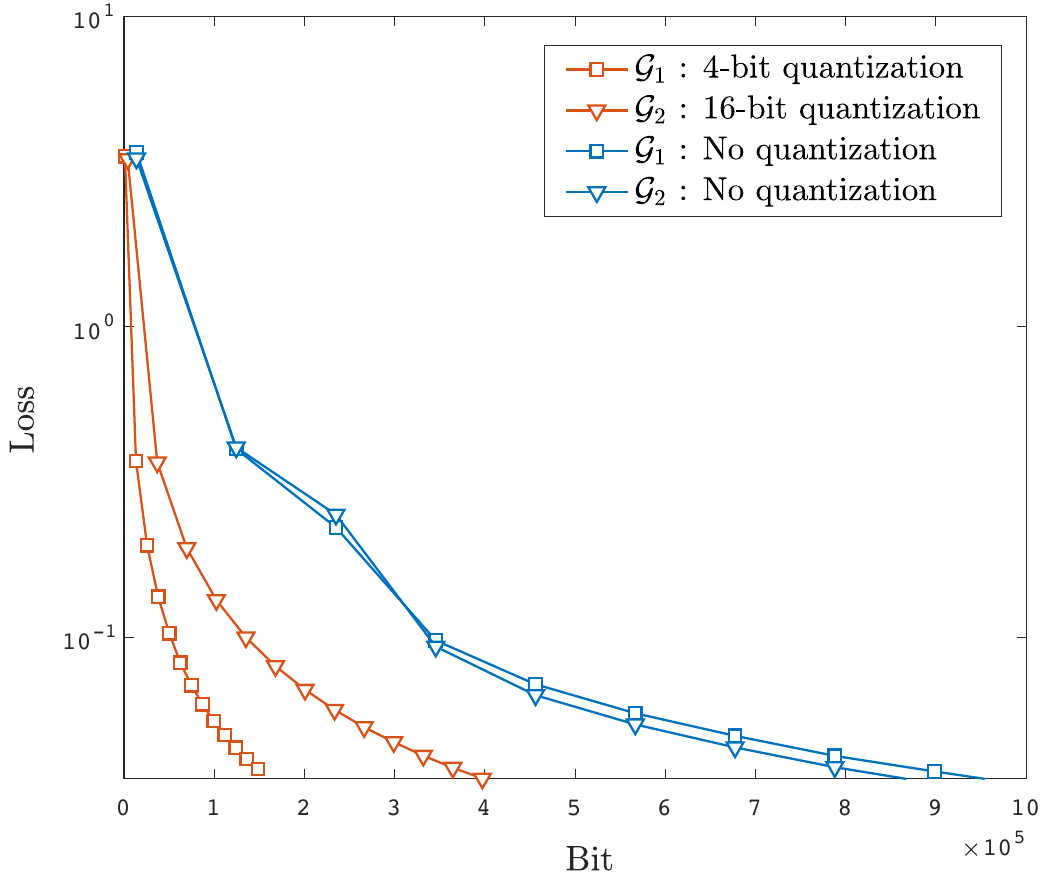}%
}
    \caption{Comparison of the proposed method and exact-communication push-sum method using least-square as objective, based on the iteration number (Top) and total number of communicated bits over the graphs $\mathcal{G}_1$ and $\mathcal{G}_2$.}
    \label{fig:convex}
    \end{figure}

The results of this experiment are in Figure \ref{fig:convex} (Top) which illustrates the convergence of Algorithm \ref{alg:Opt} based on the number of iterations for different levels of quantization and over the two graphs $\mathcal{G}_1$ and $\mathcal{G}_2$. The non-quantized method outperforms the quantized methods based on iteration. This is due to the quantization noise injected in the flow of information over the graph which depends on the number of bits each node uses for encoding and the structure of graph. However this error asymptotically vanishes resulting in small overall quantization noise. This implies that with less quantization noise (i.e., using more bits to encode) the loss decay based on iteration number gets smaller. However as we observe in Figure~\ref{fig:convex} (Bottom), more quantization levels will result in larger number of bits required to achieve a certain level of loss. Consequently, the push-sum protocol with exact communication for optimization over directed graphs is not communication-efficient as we demonstrated that using smaller number of bits with Algorithm \ref{alg:Opt} results in $5$x reduction in transmitted bits. 

As we showed in Theorem \ref{thm:nonconvex} in Section \ref{sec:analysis}, our proposed method guarantees convergence to a stationary point for non-convex and smooth objective functions. In order to illustrate this, we train a neural-network with $10$ hidden units with sigmoid activation function to classify the MNIST data-set into $10$ classes. We use the graph $\mathcal{G}_1$ with $10$ nodes where each node has access to $1000$ samples of data-set and uses a randomly selected mini-batch of size $10$ for computing the local stochastic gradient descent. For each setting, the step-size $\alpha$ is fine-tuned up to iteration $200$ and over $15$ values in the interval $[0.1,\, 3]$. Figure \ref{fig:nonconvex} illustrates the results for training loss of two aforementioned methods based on number of iteration (Top) and total number of bits communicated between two neighbor nodes (Bottom). We note the close gap in each iteration between the loss decay of our proposed method with $8$ bits quantization, and the push-sum with exact communication. However since our method uses significantly less bits in each iteration, it reaches the same training loss in fewer iterations. In particular Figure \ref{fig:nonconvex} (Bottom) demonstrates $5$x reduction in total number of bits communicated using our proposed method.

\begin{figure}
\centering
\subfloat{%
  \includegraphics[clip,width=0.78\columnwidth,height=0.52\columnwidth]{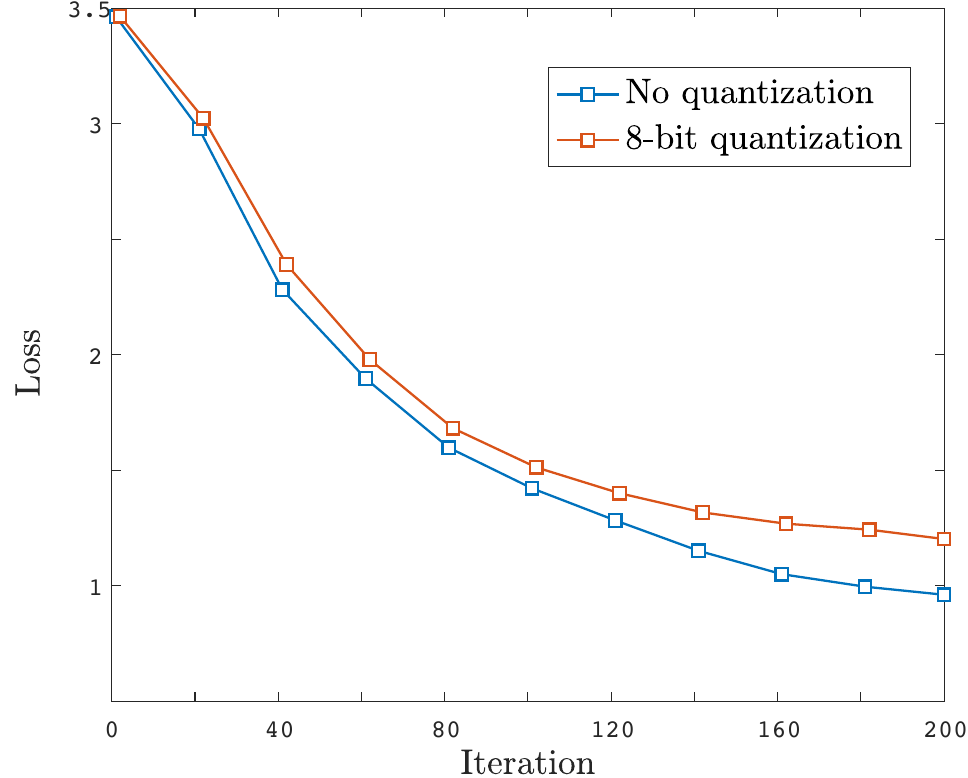}%
}

\subfloat{%
  \includegraphics[clip,width=0.78\columnwidth,height=0.52\columnwidth]{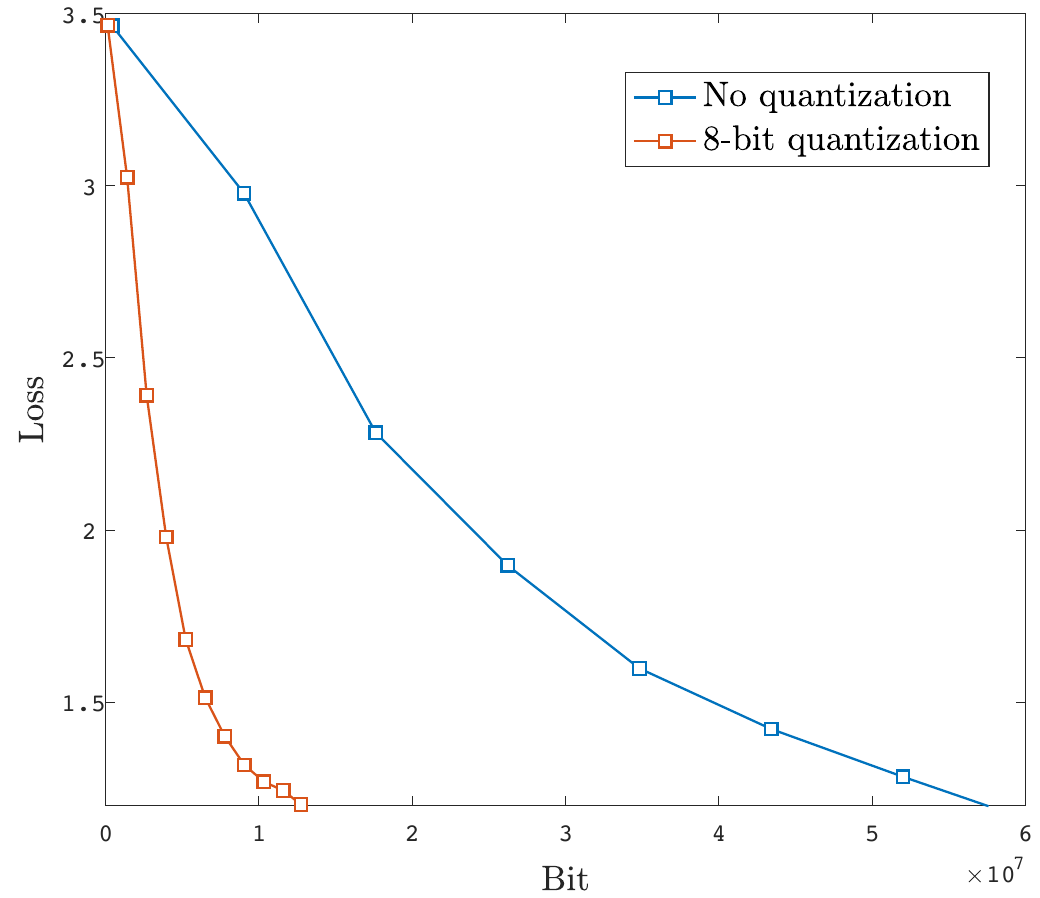}%
}
\caption{Comparison of the proposed method and exact-communication push-sum method in training a neural network with MNIST data-set, based on the iteration number (Top) and total number of communicated bits (Bottom).} 
\label{fig:nonconvex}
\end{figure}

\section{Conclusion and Future Work}
In this paper, we proposed a scheme for communication-efficient decentralized learning over directed graphs. We showed that our method converges at the same convergence rate as non-quantized methods for both gossip and decentralized optimization problems. As we demonstrated in Section \ref{sec:numerical}, the proposed approach results in significant improvements in communication-time of the push-sum protocol. An interesting future direction is extending these results to algorithms that achieve linear convergence for strongly-convex problems (e.g., \cite{xi2017dextra}). Another direction is extending our results to asynchronous decentralized optimization over directed graphs.

\section*{Acknowledgements}
This work was supported by National Science Foundation (NSF) under grant CCF-1909320 and UC Office of President under Grant LFR-18-548175.
\bibliography{C_ICML}

\begin{thebibliography}{30}
\providecommand{\natexlab}[1]{#1}
\providecommand{\url}[1]{\texttt{#1}}
\expandafter\ifx\csname urlstyle\endcsname\relax
  \providecommand{\doi}[1]{doi: #1}\else
  \providecommand{\doi}{doi: \begingroup \urlstyle{rm}\Url}\fi

\bibitem[Alistarh et~al.(2017)Alistarh, Grubic, Li, Tomioka, and
  Vojnovic]{alistarh2017qsgd}
Alistarh, D., Grubic, D., Li, J., Tomioka, R., and Vojnovic, M.
\newblock Qsgd: Communication-efficient sgd via gradient quantization and
  encoding.
\newblock In \emph{Advances in Neural Information Processing Systems}, pp.\
  1709--1720, 2017.

\bibitem[Assran \& Rabbat(2020)Assran and Rabbat]{assran2020asynchronous}
Assran, M. and Rabbat, M.
\newblock Asynchronous gradient-push.
\newblock \emph{IEEE Transactions on Automatic Control}, 2020.

\bibitem[Assran et~al.(2018)Assran, Loizou, Ballas, and
  Rabbat]{assran2018stochastic}
Assran, M., Loizou, N., Ballas, N., and Rabbat, M.
\newblock Stochastic gradient push for distributed deep learning.
\newblock \emph{arXiv preprint arXiv:1811.10792}, 2018.

\bibitem[Bernstein et~al.(2018)Bernstein, Wang, Azizzadenesheli, and
  Anandkumar]{bernstein2018signsgd}
Bernstein, J., Wang, Y.-X., Azizzadenesheli, K., and Anandkumar, A.
\newblock signsgd: Compressed optimisation for non-convex problems.
\newblock \emph{arXiv preprint arXiv:1802.04434}, 2018.

\bibitem[Blondel et~al.(2005)Blondel, Hendrickx, Olshevsky, and
  Tsitsiklis]{blondel2005convergence}
Blondel, V.~D., Hendrickx, J.~M., Olshevsky, A., and Tsitsiklis, J.~N.
\newblock Convergence in multiagent coordination, consensus, and flocking.
\newblock In \emph{Proceedings of the 44th IEEE Conference on Decision and
  Control}, pp.\  2996--3000. IEEE, 2005.

\bibitem[Doan et~al.(2018)Doan, Maguluri, and Romberg]{doan2018accelerating}
Doan, T.~T., Maguluri, S.~T., and Romberg, J.
\newblock Accelerating the convergence rates of distributed subgradient methods
  with adaptive quantization.
\newblock \emph{arXiv preprint arXiv:1810.13245}, 2018.

\bibitem[Jadbabaie et~al.(2003)Jadbabaie, Lin, and
  Morse]{jadbabaie2003coordination}
Jadbabaie, A., Lin, J., and Morse, A.~S.
\newblock Coordination of groups of mobile autonomous agents using nearest
  neighbor rules.
\newblock \emph{IEEE Transactions on automatic control}, 48\penalty0
  (6):\penalty0 988--1001, 2003.

\bibitem[Karimireddy et~al.(2019)Karimireddy, Rebjock, Stich, and
  Jaggi]{karimireddy2019error}
Karimireddy, S.~P., Rebjock, Q., Stich, S.~U., and Jaggi, M.
\newblock Error feedback fixes signsgd and other gradient compression schemes.
\newblock \emph{arXiv preprint arXiv:1901.09847}, 2019.

\bibitem[Kempe et~al.(2003)Kempe, Dobra, and Gehrke]{kempe2003gossip}
Kempe, D., Dobra, A., and Gehrke, J.
\newblock Gossip-based computation of aggregate information.
\newblock In \emph{44th Annual IEEE Symposium on Foundations of Computer
  Science, 2003. Proceedings.}, pp.\  482--491. IEEE, 2003.

\bibitem[Koloskova et~al.(2019)Koloskova, Stich, and
  Jaggi]{koloskova2019decentralized}
Koloskova, A., Stich, S.~U., and Jaggi, M.
\newblock Decentralized stochastic optimization and gossip algorithms with
  compressed communication.
\newblock \emph{arXiv preprint arXiv:1902.00340}, 2019.

\bibitem[Lian et~al.(2017)Lian, Zhang, Zhang, Hsieh, Zhang, and
  Liu]{lian2017can}
Lian, X., Zhang, C., Zhang, H., Hsieh, C.-J., Zhang, W., and Liu, J.
\newblock Can decentralized algorithms outperform centralized algorithms? a
  case study for decentralized parallel stochastic gradient descent.
\newblock In \emph{Advances in Neural Information Processing Systems}, pp.\
  5330--5340, 2017.

\bibitem[Nedi{\'c} \& Olshevsky(2014)Nedi{\'c} and
  Olshevsky]{nedic2014distributed}
Nedi{\'c}, A. and Olshevsky, A.
\newblock Distributed optimization over time-varying directed graphs.
\newblock \emph{IEEE Transactions on Automatic Control}, 60\penalty0
  (3):\penalty0 601--615, 2014.

\bibitem[Nedic \& Ozdaglar(2009)Nedic and Ozdaglar]{nedic2009distributed}
Nedic, A. and Ozdaglar, A.
\newblock Distributed subgradient methods for multi-agent optimization.
\newblock \emph{IEEE Transactions on Automatic Control}, 54\penalty0
  (1):\penalty0 48--61, 2009.

\bibitem[Nedic et~al.(2009)Nedic, Olshevsky, Ozdaglar, and
  Tsitsiklis]{nedic2009distributed_quant}
Nedic, A., Olshevsky, A., Ozdaglar, A., and Tsitsiklis, J.~N.
\newblock On distributed averaging algorithms and quantization effects.
\newblock \emph{IEEE Transactions on automatic control}, 54\penalty0
  (11):\penalty0 2506--2517, 2009.

\bibitem[Nedic et~al.(2017)Nedic, Olshevsky, and Shi]{nedic2017achieving}
Nedic, A., Olshevsky, A., and Shi, W.
\newblock Achieving geometric convergence for distributed optimization over
  time-varying graphs.
\newblock \emph{SIAM Journal on Optimization}, 27\penalty0 (4):\penalty0
  2597--2633, 2017.

\bibitem[Pu et~al.(2020)Pu, Shi, Xu, and Nedic]{pu2020push}
Pu, S., Shi, W., Xu, J., and Nedic, A.
\newblock Push-pull gradient methods for distributed optimization in networks.
\newblock \emph{IEEE Transactions on Automatic Control}, 2020.

\bibitem[Reisizadeh et~al.(2019)Reisizadeh, Taheri, Mokhtari, Hassani, and
  Pedarsani]{reisizadeh2019robust}
Reisizadeh, A., Taheri, H., Mokhtari, A., Hassani, H., and Pedarsani, R.
\newblock Robust and communication-efficient collaborative learning.
\newblock In \emph{Advances in Neural Information Processing Systems}, pp.\
  8386--8397, 2019.

\bibitem[Saber \& Murray(2003)Saber and Murray]{saber2003consensus}
Saber, R.~O. and Murray, R.~M.
\newblock Consensus protocols for networks of dynamic agents.
\newblock 2003.

\bibitem[Shi et~al.(2015)Shi, Ling, Wu, and Yin]{shi2015extra}
Shi, W., Ling, Q., Wu, G., and Yin, W.
\newblock Extra: An exact first-order algorithm for decentralized consensus
  optimization.
\newblock \emph{SIAM Journal on Optimization}, 25\penalty0 (2):\penalty0
  944--966, 2015.

\bibitem[Stich(2018)]{stich2018local}
Stich, S.~U.
\newblock Local sgd converges fast and communicates little.
\newblock \emph{arXiv preprint arXiv:1805.09767}, 2018.

\bibitem[Tsianos et~al.(2012)Tsianos, Lawlor, and Rabbat]{tsianos2012push}
Tsianos, K.~I., Lawlor, S., and Rabbat, M.~G.
\newblock Push-sum distributed dual averaging for convex optimization.
\newblock In \emph{2012 ieee 51st ieee conference on decision and control
  (cdc)}, pp.\  5453--5458. IEEE, 2012.

\bibitem[Wang et~al.(2019)Wang, Tantia, Ballas, and Rabbat]{wang2019slowmo}
Wang, J., Tantia, V., Ballas, N., and Rabbat, M.
\newblock Slowmo: Improving communication-efficient distributed sgd with slow
  momentum.
\newblock \emph{arXiv preprint arXiv:1910.00643}, 2019.

\bibitem[Xi \& Khan(2015)Xi and Khan]{xi2015linear}
Xi, C. and Khan, U.~A.
\newblock On the linear convergence of distributed optimization over directed
  graphs.
\newblock \emph{arXiv preprint arXiv:1510.02149}, 2015.

\bibitem[Xi \& Khan(2017)Xi and Khan]{xi2017dextra}
Xi, C. and Khan, U.~A.
\newblock Dextra: A fast algorithm for optimization over directed graphs.
\newblock \emph{IEEE Transactions on Automatic Control}, 62\penalty0
  (10):\penalty0 4980--4993, 2017.

\bibitem[Xi et~al.(2017)Xi, Wu, and Khan]{xi2017distributed}
Xi, C., Wu, Q., and Khan, U.~A.
\newblock On the distributed optimization over directed networks.
\newblock \emph{Neurocomputing}, 267:\penalty0 508--515, 2017.

\bibitem[Xi et~al.(2018)Xi, Mai, Xin, Abed, and Khan]{xi2018linear}
Xi, C., Mai, V.~S., Xin, R., Abed, E.~H., and Khan, U.~A.
\newblock Linear convergence in optimization over directed graphs with
  row-stochastic matrices.
\newblock \emph{IEEE Transactions on Automatic Control}, 63\penalty0
  (10):\penalty0 3558--3565, 2018.

\bibitem[Xiao \& Boyd(2004)Xiao and Boyd]{xiao2004fast}
Xiao, L. and Boyd, S.
\newblock Fast linear iterations for distributed averaging.
\newblock \emph{Systems \& Control Letters}, 53\penalty0 (1):\penalty0 65--78,
  2004.

\bibitem[Xin \& Khan(2018)Xin and Khan]{xin2018linear}
Xin, R. and Khan, U.~A.
\newblock A linear algorithm for optimization over directed graphs with
  geometric convergence.
\newblock \emph{IEEE Control Systems Letters}, 2\penalty0 (3):\penalty0
  315--320, 2018.

\bibitem[Yuan et~al.(2016)Yuan, Ling, and Yin]{yuan2016convergence}
Yuan, K., Ling, Q., and Yin, W.
\newblock On the convergence of decentralized gradient descent.
\newblock \emph{SIAM Journal on Optimization}, 26\penalty0 (3):\penalty0
  1835--1854, 2016.

\bibitem[Zeng \& Yin(2015)Zeng and Yin]{zeng2015extrapush}
Zeng, J. and Yin, W.
\newblock Extrapush for convex smooth decentralized optimization over directed
  networks.
\newblock \emph{arXiv preprint arXiv:1511.02942}, 2015.

\end{thebibliography}
\bibliographystyle{icml2020}
\onecolumn
\section*{\Large Appendix}
In this section, we first provide further numerical experiments as well as presenting the hyper-parameters of all methods used in the experiments. We then continue with presenting proofs of the Theorems \ref{thm:gossip}-\ref{thm:nonconvex}.
\subsection{Neural Network Training on CIFAR-10 dataset}
We train a neural network of 20 hidden-units with sigmoid activation functions for binary classification of the CIFAR-10 dataset. We use the topology of $\mathcal{G}_1$ as in Figure \ref{fig:graphs} with the corresponding weight matrix constructed according to Example \ref{ex:graph}. The value of step size $\alpha$ is fine-tuned among 15 values of $\al$ uniformly selected in $[0.1,3]$ and up to iteration 300 for both algorithms.    
\begin{figure}[H]
    \centering
    \hspace{0mm}\includegraphics[width=7cm,height=5.9cm]{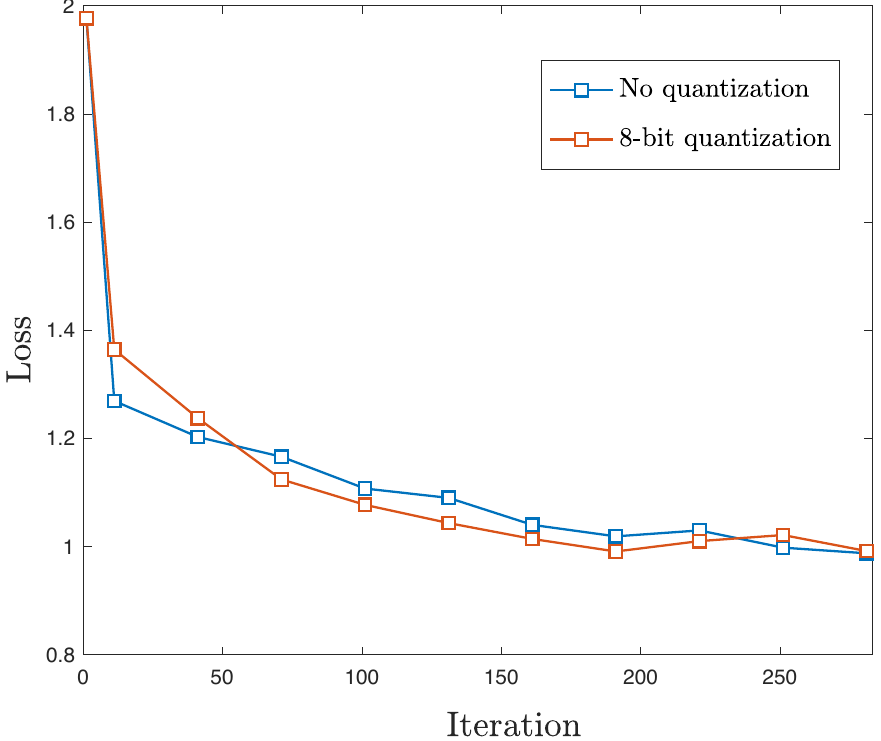}\quad\quad
    \includegraphics[width=7.4cm,height=6cm]{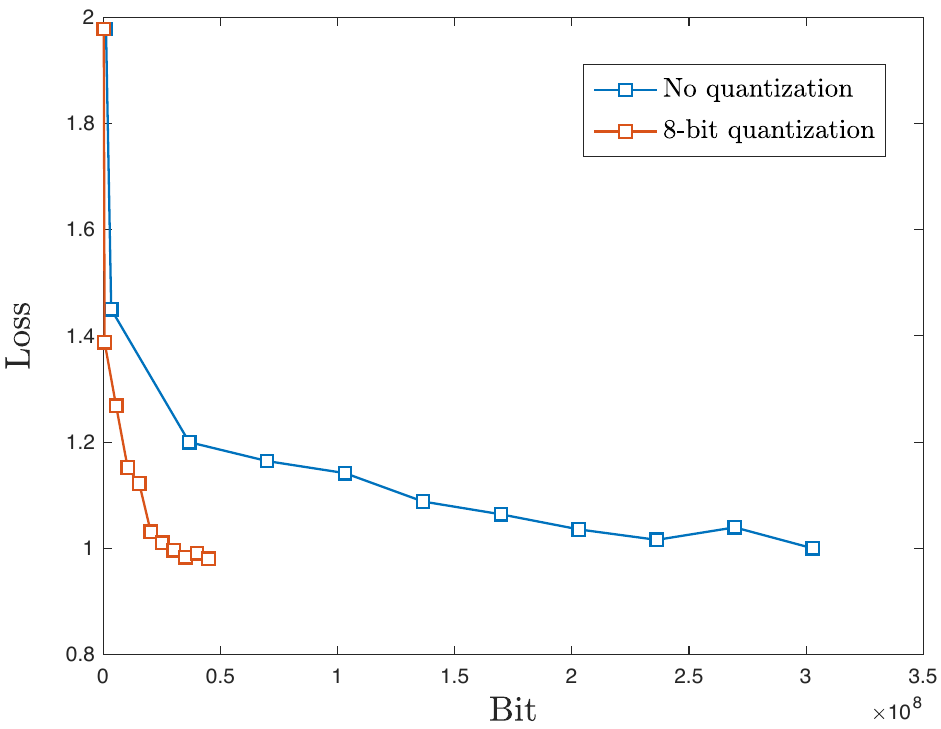}\vspace{0mm}
    \caption{Comparison of the proposed method and exact-communication push-sum method in training a neural network with CIFAR-10 data-set, based on the iteration number (Left) and total number of bits communicated between two neighbor nodes (Right).}
    \label{fig:CIFAR}
    \vspace{-0.1in}
\end{figure}
Figure \ref{fig:CIFAR} illustrates the performance of the proposed algorithm based on iteration number (Left) and number of bits communicated between two neighbor nodes (Right), and compares the vanilla push-sum method and the proposed communication-efficient method with 8 bits for quantization. We use SGD with mini-batch size of 100 samples for each node in both methods. We highlight the close similarity in training loss of the two methods for a fixed number of iteration. More importantly the proposed method uses remarkably smaller number of bits implying faster communication while reaching the same level of training loss.  
\subsection{Details of the Numerical Experiments}
The step-sizes of the algorithms used in Section \ref{sec:numerical} are fine tuned over the interval $[0.01, \,3]$, so that the best error achieved by each method is compared. In Table \ref{tab:param} we present the fine-tuned step sizes as well as model size (i.e. dimension of parameteres) and the mini-batch size of the algorithms that are used throughout the numerical experiments. 
\begin{table}[H]
\caption{Details on the hyper-parameters of the vanilla push-sum algorithm with no quantization and the proposed quantized push-sum algorithm.}
\label{tab:param}
\vskip 0.1in
\begin{center}
\begin{small}
\begin{sc}
\begin{tabular}{lcccccr}
\toprule
Objective & iteration &model size& mini-batch size& graph & step size \\
\midrule
Square loss     &50 & 256&1&$\mathcal{G}_1$ & 1.7  \\
Square loss    &50 & 256 &1&$\mathcal{G}_2$ &  1.1\\
NN, MNIST &200 & 7960&10&$\mathcal{G}_1$&     2.2   \\
NN, CIFAR-10  &300 &20542 &100&$\mathcal{G}_1$& 1.1   \\
Square loss (4-bits)     &50 &256&1&$\mathcal{G}_1$ & 1.1  \\
Square loss (16-bits)     &50 &256&1&$\mathcal{G}_2$ &  1.1\\
NN, MNIST (8-bits) &200 & 7960&10& $\mathcal{G}_1$&     1.9   \\
NN, CIFAR-10 (8-bits)  &300 &20542&100& $\mathcal{G}_1$& 0.3   \\
\bottomrule
\end{tabular}
\end{sc}
\end{small}
\end{center}
\vskip -0.1in
\end{table}

\section*{\Large Proofs}
\subsection*{Notation}
Throughout this section we set the following notation. For variables, stochastic gradients and gradients, we concatenate the row vectors corresponding to each node to form the following matrices:
\begin{align*}
&Z(t) :=\left[{\z_1(t)}\,; \quad \z_2(t)\quad{\cdots}\,\quad  {\z_n(t)}\right] \in \mathbb{R}^{n \times d},\\
&\partial F\left(Z(t), \zeta_{t}\right):=\left[\nabla F_{1}\left(\z_1(t) , \zeta_{1,t}\right)\,; \quad\nabla F_{2}\left(\z_2(t) , \zeta_{2,t}\right)\, \quad \cdots \quad \nabla F_{n}\left(\z_n(t) , {\zeta}_{n, t}\right)\right] \in \mathbb{R}^{n \times d},\\
&\partial f\left(Z(t)\right):=\left[\nabla f_{1}\left(\z_1(t) \right)\,; \quad\nabla f_{2}\left(\z_2(t) \right)\, \quad \cdots \quad \nabla f_{n}\left(\z_n(t) \right)\right] \in \mathbb{R}^{n \times d}.
\end{align*}

\section{Proof of Theorem \ref{thm:gossip} : Quantized Gossip over Directed Graphs}
First we write the iterations of Algorithm \ref{alg:Gossip} in matrix notation to derive the following: 

\begin{align}\label{alg:gossip_matrix}
\left\{
      \begin{array}{ll}
    &Q(t)=Q\left(X(t)-\Xh(t)\right)\\
    &\Xh(t+1)=\Xh(t)+Q(t)\\
   &X(t+1)=X(t)+(A-I) \Xh(t+1)\\
   &\y(t+1)=A \y(t)\\
    &\z_{i}(t+1)=\frac{\x _i(t+1)}{y_i(t+1)}\\
      \end{array}
      \right.
\end{align}

Based on this, we can rewrite the update rule for $X(t+1)$ as follows:

\begin{equation}\nn
X(t+1)=A X(t)+(A-I)(\Xh(t+1)-X(t)).
\end{equation}
By repeating this for $X(t),...,X(1)$, the update rule for $X(t+1)$ takes the following shape:

\begin{equation}\label{eq:finalxt}
X(t+1)=A^{t} X(1)+\sum_{s=0}^{t-1} A(A-I)(X(t-s+1)-X(t-s)).
\end{equation}
Multiplying both sides by $\one^T$ and recalling that $\one^TA=\one^T$, yields that

\begin{align}
\one^TX(t+1) =\one^TX(1).
\end{align}

With this and \eqref{eq:finalxt} we have for all $t\ge1$:
\begin{align}\label{eq:updateXt}
\begin{split}
\bigg\|X(t+1)-\bm{\phi} \one^T X(1)\bigg\|&=\left\|A^{t} X(1)+\sum_{s=0}^{t-1} A^{s}(A-I)(\Xh(t-s+1)-X(t-s))-\bm{\phi} \one^T X(1)\right\|\\
&\leqslant C \lambda^{t}\bigg\|X(1)\bigg\|+\,2\,C \sum_{s=0}^{t-1} \lambda^{s}\bigg\| \Xh(t-s+1)-X(t-s)\bigg\|.
\end{split}
\end{align}
Furthermore by the iterations of Algorithm \ref{alg:gossip_matrix} as well as the assumption on quantization noise in Assumption \ref{assumption:qsgd} we find that
\begin{align*}
\begin{split}
\E\bigg\|X(t+1)-\Xh(t+2)\bigg\|&=\E\bigg\| X(t+1)-\Xh(t+1)-Q(t+1)\bigg\| \\
&\leq\omega\bigg\|X(t+1)-\Xh(t+1)\bigg\| \\
&=\omega\bigg\|X(t)+(A-I) \Xh(t+1)-\Xh(t+1)\bigg\| \\
&\leq\omega\bigg\|X(t)-\Xh(t+1)\bigg\|+\omega\bigg\|(A-I) \Xh(t+1)\bigg\|. 
\end{split}
\end{align*}
Next, we add and subtract $(A-I)X(t)$ to the RHS and also use the fact that $A\bm \phi = \bm \phi$ ( \cite{zeng2015extrapush}) to conclude that
\begin{align}\label{eq:xt-xht}
\begin{split}
\E\bigg\|X(t+1)-\Xh(t+2)\bigg\|&\le\omega\bigg\|X(t)-\Xh(t+1)\bigg\|+\omega\bigg\|(A-I)(\Xh(t+1)-X(t))\bigg\| \\
&+\omega\bigg\|(A-I)\left(X(t)-\bm{\phi} \one^{T} X(1)\right)\bigg\|.
\end{split}
\end{align}
Let $\gamma:=\|A-I\|, \,\lambda_{1}:=\omega(1+\gamma)$ and $ \lambda_{1}':=\omega \gamma$, then we conclude that 
\begin{align}\nn
\E\bigg\|X(t+1)-\Xh(t+2)\bigg\| =\la_1\bigg\|\Xh(t+1)-X(t)\bigg\|+\la_1' \bigg\| X(t)-\bm{\phi} \one^T X(1) \bigg\|.
\end{align}
Denoting by $R(t) :=\E\bigg\|X(t)-\bm{\phi} \one^T X(1)\bigg\|$ and  $U(t)=\E\bigg\| \Xh(t+1)-X(t)\bigg\|$, we derive the next two inequalities based on \eqref{eq:updateXt} and \eqref{eq:xt-xht}:
\begin{align}\label{eq:UandR}
\left\{\begin{array}{l}
{\mathbb{E}\,R(t+1) \leq C \cdot \lambda^{t}\|X(1)\|\,+2\,C \sum_{s=0}^{t-1} \lambda^{s} U(t-s)} \\[5pt]
{ \mathbb{E}\,U(t+1) \leq \lambda_{1} U(t)+\lambda_{1}' R(t)}
\end{array}\right.
\end{align}
\begin{lem}\label{lem:Uineq}
The iterates of $U(t)$ satisfy for all iterations $t\ge1$ 
\begin{align}\nn
{U(t) \leq  \xi_1\la^{t/2}},
\end{align}
where $\xi_1=\max\{\frac{4 C}{\lambda} \lambda_{1}\|X(1)\|,\,\frac{\la_1\|X(1)\|}{\la^{1/2}}\} $ and for the values of  $\la_1$ chosen such that $\la_1 \le \frac{1}{2}(\frac{1}{\la^{1/2}}+\frac{2C}{\la-\la^{3/2}})^{-1}$\,.
\end{lem}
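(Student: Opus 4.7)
The plan is to prove the lemma by strong induction on $t$, exploiting the coupled recursion~\eqref{eq:UandR} together with the condition on $\lambda_1$ and the two-term max defining $\xi_1$.

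\emph{Base case.} At $t=1$, the initialization $\hat{X}(1)=\mathbf{0}$ together with the algorithm update gives $\hat{X}(2)-X(1)=Q(X(1))-X(1)$, so Assumption~\ref{assumption:qsgd} yields $U(1)\le \omega\|X(1)\|\le \lambda_1\|X(1)\|$. This is bounded by $\xi_1\lambda^{1/2}$ precisely because the second argument of the max defining $\xi_1$ is $\lambda_1\|X(1)\|/\lambda^{1/2}$.

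\emph{Inductive step.} Assume $U(s)\le \xi_1\lambda^{s/2}$ for all $s\le t$. I first bound $R(t)$ using the first line of~\eqref{eq:UandR}: substituting the inductive hypothesis into the convolution sum gives a geometric series in $\lambda^{1/2}$, so
\[
R(t)\le C\lambda^{t-1}\|X(1)\|+\frac{2C\,\xi_1\,\lambda^{(t-1)/2}}{1-\lambda^{1/2}}.
\]
Plugging into $U(t+1)\le \lambda_1 U(t)+\lambda_1'R(t)$, dividing by $\lambda^{(t+1)/2}$, and using $\lambda_1'\le\lambda_1$, the target inequality $U(t+1)\le \xi_1\lambda^{(t+1)/2}$ becomes
\[
\xi_1\Bigl(1-\lambda_1\lambda^{-1/2}-\tfrac{2C\lambda_1}{\lambda(1-\lambda^{1/2})}\Bigr)\;\ge\; \lambda_1 C\lambda^{(t-3)/2}\|X(1)\|.
\]
The hypothesis $\lambda_1\le\tfrac12\bigl(\lambda^{-1/2}+2C/(\lambda-\lambda^{3/2})\bigr)^{-1}$ is exactly what forces the bracket on the left to be $\ge 1/2$. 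What remains is to check $\xi_1\ge 2C\lambda_1\|X(1)\|\lambda^{(t-3)/2}$ for all $t\ge 2$ (the case $t=1\mapsto t+1=2$ is handled separately, using $R(1)\le C\|X(1)\|$ since the convolution sum is empty).

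\emph{Verifying the max.} The worst case in the required inequality $\xi_1\ge 2C\lambda_1\|X(1)\|\lambda^{(t-3)/2}$ occurs at the smallest admissible $t$: for $t=1$ (treated via $R(1)$) and $t=2$ one needs, respectively, $\xi_1\ge 2C\lambda_1\|X(1)\|/\lambda$ and $\xi_1\ge 2C\lambda_1\|X(1)\|/\lambda^{1/2}$, while for $t\ge 3$ the factor $\lambda^{(t-3)/2}\le 1$. All of these are absorbed by the first argument $\tfrac{4C\lambda_1\|X(1)\|}{\lambda}$ of the max defining $\xi_1$, since $\lambda<1$. Thus both arguments of the max have a clear role: the second ensures the base case $t=1$, and the first handles the small-$t$ boundary terms of the inductive step.

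\emph{Anticipated difficulty.} The only delicate point is the bookkeeping around small $t$: the $C\lambda^{t-1}\|X(1)\|$ contribution in $R(t)$, once rescaled by $\lambda^{-(t+1)/2}$, produces $\lambda^{(t-3)/2}$, which grows as $t$ shrinks. This is exactly why $\xi_1$ is defined as a maximum rather than a single expression, and why the condition on $\lambda_1$ must leave enough slack (the $\tfrac12$) to absorb the leading $\lambda_1 U(t)$ and the geometric $R(t)$ contribution simultaneously. Once this is tracked, the induction closes and gives the stated geometric decay $U(t)\le \xi_1\lambda^{t/2}$.
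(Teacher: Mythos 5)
Your proof is correct and follows essentially the same route as the paper: strong induction using the coupled system \eqref{eq:UandR}, a geometric-series bound on the convolution term in $R(t)$, the condition on $\la_1$ to make the $\xi_1$-coefficient at most $\tfrac12$, and the two arms of the max in $\xi_1$ covering the base case and the small-$t$ boundary terms, exactly as in the paper's induction. The only (harmless) cosmetic differences are that you keep the tighter coefficient $C$ on the $\la^{t-1}\|X(1)\|$ term and treat $t=1$ separately, whereas the paper uses $2C$ throughout and absorbs all $t\ge1$ uniformly via $\la^{(t-3)/2}\le\la^{-1}$.
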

\begin{proof}
Noting that $\lambda_1' \le \lambda_1$, we have by \eqref{eq:UandR} for all $t\ge1$: 
\begin{align}\label{eq:eqU}
U(t+1) \leq \lambda_{1} U(t)+2C\cdot \lambda_{1}\left(\lambda^{t-1}\|X(1)\|+\sum_{s=0}^{t-2} \lambda^{s} U\left(t-s-1\right)\right).
\end{align}
The proof is based on induction on the inequality in \eqref{eq:eqU}.
Let  $U(t) \leq \xi_1 \cdot \lambda^{t / 2}$, then by \eqref{eq:eqU}
\begin{align*}
U(t+1) &\leq \lambda_{1} \cdot \xi_1 \cdot \lambda^{t / 2} + 2C \lambda_{1}\left(\lambda^{t-1}\|X(1)\|\right)+ 2C \lambda_{1} \cdot \xi_1 \cdot \frac{\lambda^{\frac{t-1}{2}}}{1-\lambda^{1 / 2}} \\
&=\frac{\lambda_{1} \cdot \xi_1}{\lambda^{1 / 2}} \lambda^{\frac{t+1}{2}}+2C \lambda_{1}\left(\|X(1)\| \cdot \lambda^{\frac{t-3}{2}}\right)\lambda^{\frac{t+1}{2}}+\frac{2C \cdot \lambda_{1} \cdot \xi_1}{\lambda-\lambda^{3 / 2}} \lambda^{\frac{t+1}{2}} \\
&\leq \lambda_{1} \xi_1\left(\frac{1}{\lambda^{1 / 2}}+\frac{2C}{\lambda-\lambda^{3 / 2}}\right) \lambda^{\frac{t+1}{2}}+\frac{2C \cdot \lambda_{1}\|X(1)\|}{\lambda} \cdot \lambda^{\frac{t+1}{2}} \\
&\le \left(\frac{1}{2} \xi_1+\frac{2C \cdot \lambda_{1}\|X(1) \|}{\lambda}\right) \cdot \lambda^{\frac{t+1}{2}}\le\xi_1 \lambda^{\frac{t+1}{2}},
\end{align*}
where the last two steps follow by the assumptions of the lemma on $\xi_1$ and $\la_1$. Moreover for $U(1)$ we follow the inequalities similar to \eqref{eq:xt-xht} to find that 
$$
U(1) \le \omega \| X(1) \| \le \xi_1\la^{1/2},
$$
where we used $\omega \le \la_1$ in the last inequality. This completes the proof of the lemma.
\end{proof}
Lemma \ref{lem:Uineq} implies that the quantization error $U(t)$ is decaying with the rate $\la^{t/2}$.  As we will see shortly, this results in total error, decaying with the same rate. From the update rule in Algorithm \ref{alg:gossip_matrix}, we obtain the following for all $t\ge0$:
\begin{align*}
&\y(t+1)=A \y(t)=A^{t} \y(1) .
\end{align*}
Since by assumption $\y(1) = \one$ it yields that
\begin{align*}
&\y(t+1)=A^{t} \one=\left(A^{t}-\bm{\phi} \one^{T}\right) \one+\bm{\phi} \one^{T} \one\\
&=\left(A^{t}-\bm{\phi} \one^{T}\right) \one+\bm{\phi} n .
\end{align*}
Therefore for all $i \in [n]$
\begin{align*}
\y_{i}(t+1)=\left[\left(A^{t}-\bm{\phi} \one^{T}\right)\right]_{i}+\bm{\phi}_{i} n.
\end{align*}
Furthermore, based on Algorithm \ref{alg:gossip_matrix} the parameter $\z_i(t+1)$ satisfies:
\begin{align}\label{eq:z}
\begin{split}
\z_{i}(t+1)=\frac{\x_{i}(t+1)}{y_{i}(t+1)}=& \frac{\left[A^{t} X(1)+\sum_{s=0}^{t-1} A^{s}(A-I)\left(\Xh\left(t-s+1\right)-X(t-s)\right)\right]_{i}}{\left[\left(A^{t}-\bm{\phi} \one^{T}\right) \one\right]_{i}+\bm{\phi}_{i} n} .
\end{split}
\end{align}
Using this, we find the following expression for the vector representing error of node $i$ :
\begin{align*} 
&\z_{i}(t+1)-\frac{\one^{T}X(1)}{n}\\[5pt]
&=\frac{\Big[A^{t} X(1)+\sum_{s=0}^{t-1} A^{s}(A-\mathbf{I})(\Xh(t-s+1)-X(t-s))\Big]_{i}}{n\left(\Big[(A^{t}-\bm{\phi} \one^{T})\one\Big]_i + \bm{\phi}_{i} n\right)}  -\frac{\one^{T} X(1)\left(\left[\left(A^{t}-\bm{\phi} \one^{T}\right) \one\right]_{i}+\bm{\phi}_{i} n\right)}{n\left(\left[(A^{t}-\bm{\phi} \one^{T}) \one\right]_{i}+\bm{\phi}_{i} n\right)}\\[5pt]
&=\frac{n\left[A^{t}-\bm{\phi}\one^T\right]_iX(1)+n \sum_{s=0}^{t-1}\left[
A^{s}(A-\mathbf{I})\right]_{i}(\Xh(t-s+1)-X(t-s))-\one^T X(1)\left[(A^{t}-\bm{\phi}\one^T)\right]_{i} \one}{n\left(\left[A^{t}-\bm{\phi}\one^T\right]_{i} \one + \bm{\phi}_{i} n\right)}.
\end{align*}
By Proposition \ref{lem:propo} it holds that $A^t\one\ge\delta$ and $\Big\|[A^t-\bm{\phi}\one^T]_i\Big\| \le C\la^t$ for all $t\ge1$; thus, we derive the following for the error of parameter of node $i$ :  
\begin{align*}
\mathbb{E} \bigg\|  \z_{i}(t+1)-\frac{\one^{T}X(1)}{n} \bigg\| 
&\leq \frac{C}{\delta} \cdot \lambda^{t}\Big\|X(1)\Big\| \quad+\frac{C}{\delta} \sum_{s=0}^{t-1} \lambda^{s} \cdot\left(\xi_1 \cdot \lambda^{\frac{t-s}{2}}\right) 
+ \frac{C}{n \delta} \lambda^{t}\Big\|\one^T X(1)\Big\| \\
&\leq \frac{C}{\delta} \cdot \lambda^{t}\Big\|X(1)\Big\|+\frac{C}{n \delta} \lambda^{t}\Big\|\one^{T} X(1)\Big\| \cdot \sqrt{n} 
+\frac{C}{\delta} \xi_1 \cdot \frac{\lambda^{t / 2}}{1-\lambda^{1 / 2}}.
\end{align*}
Note that $\| \one^{T} X(1)\|\leq\|X(1)\| \cdot \sqrt{n}$. Thus, 
\begin{equation*}
\begin{array}{l}
{\mathbb{E}\bigg\|\z_{i}(t+1)-\frac{\one^TX(1)}{n}\bigg\| \leq \frac{2C}{\delta} \lambda^{t}\Big\|X(1)\Big\|}
{+\frac{C}{\delta} \cdot \frac{\xi_1}{1-\lambda^{1/2}} \lambda^{t/ 2}}.
\end{array}
\end{equation*}
Rewriting the condition on $\la_1$ in Lemma \ref{lem:Uineq} based on $\omega$, we derive the inequality in the statement of Theorem \ref{thm:gossip}. 
\section{Proof of Theorem \ref{thm:convex} : Quantized Push-sum with Convex Objectives} 
First we write the iterations of Algorithm \ref{alg:Opt} in matrix notation to derive the following: 

\begin{align}\label{alg:Opt_matrix}
\left\{
      \begin{array}{ll}
 &Q(t)=Q\left(X(t)-\Xh(t)\right)\\
&\Xh(t+1)=\Xh(t)+Q(t)\\
&W(t+1)=X(t)+(A-I) \Xh(t+1)\\
&\y(t+1)=A \y(t)\\
 &\z_{i}(t+1)=\frac{\w _i(t+1)}{y_i(t+1)}\\
 &X(t+1)=W(t+1)-\alpha \,\partial F(Z(t+1),\zeta_{t+1})\\
      \end{array}
      \right.
\end{align}
Similar to \eqref{eq:updateXt}, we can rewrite the iterations of $X(t)$ to obtain the following expression: 
\begin{dmath}\label{eq:nonconvexxone}
\Big\|X(t+1)-\bm\phi \one^T X(t)\Big\|^{2} \leq 3 C^{2} \lambda^{2 t}\Big\|X(1)\Big\|^{2}+3 C^{2} \alpha^{2}\left\|\sum_{s=0}^{t} \lambda^{s}\Big\| \partial F(Z(t-s+1),\zeta_{t-s+1})\Big\|\right\|^{2} +3 C^{2}\left\|\sum_{s=0}^{t-1} \lambda^{s}\Big\| \Xh(t-s+1)-X(t-s)\Big\|\right\|^{2}
\end{dmath}

For the last term, we expand the summation as well as simplifying the resulting expressions  to obtain:   
\begin{align*}
&\left\|\sum_{s=0}^{t-1} \lambda^{s}\Big\| \Xh(t-s+1)-X(t-s)\Big\|\right\|^{2} \\
&=\sum_{s=0}^{t-1} \lambda^{2s}\Big\|\Xh(t-s+1)-X(t-s)\Big\|^{2}+\sum_{s \neq s^{\prime}} \lambda^{s} \cdot \lambda^{s^{\prime}}\Big\|\Xh(t-s+1)-X(t-s)\Big\|\Big\|\Xh\left(t-s^{\prime}+1\right)-X\left(t-s^{\prime}\right)\Big\|\\
\end{align*}
Using the relation $x\cdot y \le x^2/2 +y^2/2$ for all $x,y \in \R$, as well as the inequality $\la^{2s} \le \frac{\la^s}{1-\la}$, from the equation above the following yields: 
\begin{align*}
&\bigg\|\sum_{s=0}^{t-1} \lambda^{s}\Big\| \Xh(t-s+1)-X(t-s)\Big\|\bigg\|^{2} \\
&\le \sum_{s=0}^{t-1} \lambda^{2s}\Big\|\Xh(t-s+1)-X(t-s)\Big\|^{2} 
+ 1/2 \sum_{s \neq s^{\prime}} \lambda^{s+s'} \Big\|\Xh(t-s+1)-X(t-s)\Big\|^2 
\\
&\qquad +1/2 \sum_{s \neq s^{\prime}} \lambda^{s} \cdot \lambda^{s^{\prime}}\Big\|\Xh(t-s'+1)-X(t-s')\Big\|^2 \\
&\le  \sum_{s=0}^{t-1}\left(\lambda^{2 s}+\frac{\lambda^{s}}{1-\lambda}\right)\Big\|\Xh(t-s+1)-X(t-s)\Big\|^{2} \\
&\le\sum_{s=0}^{t-1} \frac{2 \lambda^{s}}{1-\lambda}\Big\|\Xh(t-s+1)-X(t-s)\Big\|^{2}.
\end{align*}
Using a similar approach as above, we can bound the second term in the RHS of \eqref{eq:nonconvexxone} as follows:
\begin{align*}
\left\|\sum_{s=0}^{t} \lambda^{s}\Big\| \partial F(Z(t-s+1),\zeta_{t-s+1})\Big\|\right\|^{2}  \le\sum_{s=0}^{t} \frac{2 \lambda^{s}}{1-\lambda}\Big\|\partial F(Z(t-s+1),\zeta_{t-s+1}) \Big\|^{2}.
\end{align*}
Thus from \eqref{eq:nonconvexxone} with the assumption that $X(1) = 0$ (as in Assumption \ref{assumption:initialization}) we find that
\begin{align}\label{eq:nonconvexxtwo}
\E&\Big\|X(t+1)-\bm\phi \one^T X(t)\Big\|^{2}\nn \\
&\le 6C^2\E\sum_{s=0}^{t-1} \frac{\lambda^{s}}{1-\lambda}\Big\|\Xh(t-s+1)-X(t-s)\Big\|^{2} + 6C^2\alpha^2\sum_{s=0}^{t-1} \frac{\lambda^{s}}{1-\lambda}\E\,\Big\|\partial F(Z(t-s+1),\zeta_{t-s+1}) \Big\|^{2} \nn\\&\le 6C^2\sum_{s=0}^{t-1} \frac{\lambda^{s}}{1-\lambda}\E\,\Big\|\Xh(t-s+1)-X(t-s)\Big\|^{2} + \frac{6C^2\alpha^2nD^2}{(1-\la)^2},
\end{align}
where we used Assumption \ref{assumption:boundedgrad} to bound the second term in the RHS of \eqref{eq:nonconvexxtwo}. Moreover, similar to the chain of inequalities in \eqref{eq:xt-xht} derived for Algorithm \ref{alg:gossip_matrix}, we derive the following inequality here in the presence of stochastic gradients:
\begin{align} \label{eq:nonconvexxhone}
&\mathbb{E}\Big\|X(t+1)-\Xh(t+2)\Big\|^{2} \nonumber\\
&\leq 3\,\omega^2\left(1+\gamma^{2}\right) \mathbb{E}\Big\|X(t)-\Xh(t+1)\Big\|^{2}
+3\,\omega^2 \gamma^{2} \mathbb{E}\Big\|X(t)-\phi \one^T X(t-1)\Big\|^{2}\nonumber\\
&\qquad +3\,\omega^2\alpha^2\E\Big\|\partial F(Z(t+1),\zeta_{t+1})\Big\|^{2}.
\end{align}
Let $\la_2:=\omega^2(1+\gamma^2)$. Then, from \eqref{eq:nonconvexxhone} as well as bounding stochastic gradients according to Assumption \ref{assumption:boundedgrad}, it directly follows that 
\begin{align}\label{eq:nonconvexxhtwo}
\E\Big\|X(t+1)-\Xh(t+2)\Big\|^{2} \leq 3\la_2\left( \E\Big\|X(t)-\Xh(t+1)\Big\|^{2} + \E\Big\|X(t)-\bm\phi \one^T X(t-1)\Big\|^{2}+ n\alpha^2 D^2\right).
\end{align}
Let $R(t+1) := \E \Big\|X(t+1)-\bm\phi \one^T X(t)\Big\|^{2}$ and $U(t+1):= \E\Big\|X(t+1)-\Xh(t+2)\Big\|^2 $. Then noting \eqref{eq:nonconvexxtwo} and \eqref{eq:nonconvexxhtwo}, we can rewrite the corresponding errors as follows: 
\begin{align}\label{eq:UandRnonconvex}
\left\{\begin{array}{l}{ \E R(t+1) \leq \frac{6C^{2} n \alpha^{2} D^2}{(1-\lambda)^{2}}+\frac{6 C^{2}}{1-\lambda} \sum_{s=0}^{t-1} \lambda^{s} U(t-s)}, \\[5pt] {\mathbb{E}U(t+1) \leq 3 \lambda_{2}\left(U(t)+R(t)+n \alpha^{2} D^2\right).}\end{array}\right.
\end{align}
Next, we state the following lemma which shows that the error of quantization i.e. $U(t)$ decays proportionately with $\alpha^2$ . 
\begin{lem}\label{lem:iteratesnonconvex}
Under Assumption \ref{assumption:initialization}, the inequalities in  \eqref{eq:UandRnonconvex} satisfy the following for all $\la_2 \le  (\frac{1}{6}+ \frac{C^2}{(1-\la)^2})^{-1}$ and all iterations $t\ge 1$ 
\begin{align}
U(t)\le\xi_2\,\al^2,
\end{align}
where $\xi_2 = 6 \la_2\left(nD^2+ \frac{6nC^2D^2}{(1-\la)^2}\right)$.
\end{lem}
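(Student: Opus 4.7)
My plan is to prove the lemma by induction on $t$, using the coupled recursions in \eqref{eq:UandRnonconvex} to propagate the bound $U(t)\le\xi_2\alpha^2$. The self-similar bound is tailored so that the $\alpha^2$ scaling is preserved across iterations, which is only possible because the only ``external'' driver in the system is the bounded stochastic-gradient term of order $\alpha^2$.

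For the base case, I use Assumption \ref{assumption:initialization}: since $X(1)=\Xh(1)=\mathbf{0}$, the quantized increment $Q(1)=Q(\mathbf{0})=\mathbf{0}$, so $\Xh(2)=\mathbf{0}$ and hence $U(1)=\E\|X(1)-\Xh(2)\|^2=0\le\xi_2\alpha^2$. This gives me the induction start for free and also implies $R(1)\le 6C^2n\alpha^2 D^2/(1-\la)^2$ directly from the first inequality of \eqref{eq:UandRnonconvex} (the sum being empty).

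For the inductive step, assume $U(s)\le\xi_2\alpha^2$ for all $s\le t$. Plugging this into the first line of \eqref{eq:UandRnonconvex} and summing the geometric series $\sum_{s\ge 0}\la^s=(1-\la)^{-1}$, I obtain
\begin{equation*}
R(t)\le\frac{6C^2\alpha^2}{(1-\la)^2}\bigl(nD^2+\xi_2\bigr).
\end{equation*}
Substituting this together with $U(t)\le\xi_2\alpha^2$ into the second line of \eqref{eq:UandRnonconvex} yields
\begin{equation*}
U(t+1)\le 3\la_2\alpha^2\Bigl(1+\tfrac{6C^2}{(1-\la)^2}\Bigr)\bigl(\xi_2+nD^2\bigr).
\end{equation*}
It then remains to verify that the right-hand side is at most $\xi_2\alpha^2$. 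Using the definition $\xi_2=6\la_2 nD^2\bigl(1+6C^2/(1-\la)^2\bigr)$, this reduces to a purely numerical inequality in $\la_2$ and the factor $1+6C^2/(1-\la)^2$; the stated upper bound on $\la_2$ in the lemma is precisely the condition that makes this closure go through.

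I expect the main obstacle to be bookkeeping in that final algebraic closure: the numerical constants must line up between the $3\la_2$ factor in the $U$-recursion, the $6C^2/(1-\la)^2$ factor inherited from bounding $R$, and the constant $6$ baked into the definition of $\xi_2$. This is where the specific threshold on $\la_2$ is forced, so one has to be careful to track both the $\xi_2$ contribution (which feeds back into itself) and the $nD^2$ driver (which does not), and ensure the self-feedback coefficient is strictly less than one.
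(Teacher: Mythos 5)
Your proposal is correct and follows essentially the same route as the paper's own proof: the identical induction with base case $U(1)=0$ from Assumption \ref{assumption:initialization}, the same geometric-series bound $R(t)\le\frac{6C^2\al^2}{(1-\la)^2}\bigl(nD^2+\xi_2\bigr)$ fed into the second recursion of \eqref{eq:UandRnonconvex}, and the same closure splitting the self-feedback term (coefficient $3\la_2\bigl(1+\frac{6C^2}{(1-\la)^2}\bigr)\le\frac12$) from the driver term, which equals $\frac{\xi_2\al^2}{2}$ by the definition of $\xi_2$. One caution on your final remark: the closure actually requires $\la_2\le\frac16\bigl(1+\frac{6C^2}{(1-\la)^2}\bigr)^{-1}=\bigl(6+\frac{36C^2}{(1-\la)^2}\bigr)^{-1}$, a factor $36$ smaller than the lemma's literal threshold $\bigl(\frac16+\frac{C^2}{(1-\la)^2}\bigr)^{-1}$ --- an apparent typo that the paper's proof also glosses over (the condition $\omega\le\widetilde{\la}_2/\sqrt{6(1+\gamma^2)}$ in Theorems \ref{thm:convex} and \ref{thm:nonconvex} does translate to the smaller, correct value of $\la_2$), so rather than asserting the stated bound is ``precisely'' what makes the closure go through, you should carry out that last numerical verification explicitly.
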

\begin{proof}
First we write the inequalities in \eqref{eq:UandRnonconvex} based on $U(\cdot)$ to obtain: 
\begin{equation*}
U(t+1) \leq 3 \lambda_{2}\Big(U(t)+n \alpha^{2} D^2+\frac{6 nC^{2} \alpha^{2} D^2}{(1-\lambda)^{2}}+\frac{6 C^{2}}{1-\lambda} \sum_{s=0}^{t-2} \lambda^{s}U(t-s-1)\Big).
\end{equation*}
Thus, by the assumption of induction we obtain the following for $U(t+1)$: 
\begin{equation}
\begin{aligned}
U(t+1)&\le3 \lambda_{2}\left(\xi_2 \alpha^2+n \alpha^{2} D^2+\frac{6 C^{2} \alpha^{2} n D^2}{(1-\lambda)^{2}}+\frac{6 C^{2} \xi_2 \alpha^2}{(1-\lambda)} \sum_{s=0}^{t-2} \lambda^{s}\right) \\
&\le 3\la_2\al^2\xi_2\left(1+\frac{6C^2}{(1-\la)^2}\right) + 3\la_2\al^2\left(nD^2 + \frac{6nC^2D^2}{(1-\la)^2}\right)\\
&\le \frac{\xi_2 \,\al^2}{2} + 3\la_2\al^2\left(nD^2 + \frac{6nC^2D^2}{(1-\la)^2}\right) = \xi_2\,\al^2,
\end{aligned}
\end{equation}
where the last two steps follow from the assumptions for $\la_2$ and $\xi_2$, respectively. Note that based on iterations of the algorithm and Assumption \ref{assumption:initialization} we conclude that $\|U(1)\| = 0$. This completes the proof of the lemma. 
\end{proof}
\begin{lem}\label{lem:consensus_err}
Under Assumptions \ref{assumption:graph}-\ref{assumption:boundedvar}, if $\la_2 \le  (\frac{1}{6}+ \frac{C^2}{(1-\la)^2})^{-1}$, the following relation holds for the consensus error of Algorithm \ref{alg:Opt} for all $t\ge1$: 
\begin{align*}
\mathbb{E}\left\|\z_{i}(t+1)-\frac{\one^T X(t)}{n}\right\|^{2} \le \frac{6\, C^2\alpha^2}{\delta^2(1-\la)^2} (2nD^2 + \xi_2),
\end{align*}
where $\xi_2 = 6 \la_2\left(nD^2+ \frac{6nC^2D^2}{(1-\la)^2}\right)$.
\end{lem}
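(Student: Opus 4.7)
The plan is to exploit the identity $\z_i(t+1) - \overline{X}(t) = \frac{1}{y_i(t+1)}\bigl(\w_i(t+1) - y_i(t+1)\overline{X}(t)\bigr)$, bound the numerator in mean-square, and pay only a $1/\delta^2$ factor since $y_i(t+1) \geq \delta$ by Proposition~\ref{lem:propo}. Here $\overline{X}(t) = \one^T X(t)/n$.

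First I would unroll the $W$-iteration. Plugging $X(t) = W(t) - \alpha\,\partial F(Z(t),\zeta_t)$ into $W(t+1) = X(t) + (A-I)\Xh(t+1)$ yields, under Assumption~\ref{assumption:initialization},
\begin{align*}
W(t+1) = -\alpha \sum_{s=0}^{t-2} A^{s+1}\,\partial F(Z(t-s),\zeta_{t-s}) + \sum_{s=0}^{t-1} A^s(A-I)\bigl(\Xh(t-s+1) - X(t-s)\bigr).
\end{align*}
Splitting $A^s = \bm\phi\one^T + (A^s - \bm\phi\one^T)$ and using $\one^T(A-I) = 0$ collapses the rank-one parts into $\bm\phi\one^T X(t)$ (since $\one^T X(t) = -\alpha\sum_{s=0}^{t-2}\one^T\partial F(Z(t-s),\zeta_{t-s})$), so $W(t+1) = \bm\phi\one^T X(t) + R_1(t) + R_2(t)$, where $R_1$ is the gradient tail against $(A^{s+1} - \bm\phi\one^T)$ and $R_2$ is the quantization tail against $(A^s - \bm\phi\one^T)(A-I)$. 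Combining with $y_i(t+1) = [A^t\one]_i = n\phi_i + [(A^t-\bm\phi\one^T)\one]_i$ then gives
\begin{align*}
\w_i(t+1) - y_i(t+1)\overline{X}(t) = [R_1(t)]_i + [R_2(t)]_i - \tfrac{1}{n}\bigl[(A^t-\bm\phi\one^T)\one\bigr]_i\cdot\one^T X(t).
\end{align*}

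Next I would bound the three pieces in $\E\|\cdot\|^2$. For $R_1$, a Cauchy--Schwarz on the geometric sum with $\|A^{s+1} - \bm\phi\one^T\| \leq C\lambda^{s+1}$ from Proposition~\ref{lem:propo} and $\E\|\partial F\|^2 \leq nD^2$ from Assumption~\ref{assumption:boundedgrad} yields $\E\|R_1(t)\|^2$ of order $C^2\alpha^2 nD^2/(1-\lambda)^2$. For $R_2$, an analogous Cauchy--Schwarz uses $\|A^s(A-I)\| = \|A^{s+1} - A^s\| \leq 2C\lambda^s$ (by two applications of Proposition~\ref{lem:propo}) together with the quantization bound $\E\|X(t-s) - \Xh(t-s+1)\|^2 \leq \xi_2\alpha^2$ from Lemma~\ref{lem:iteratesnonconvex}, giving $\E\|R_2(t)\|^2$ of order $C^2\alpha^2\xi_2/(1-\lambda)^2$. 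The third piece uses $|[(A^t-\bm\phi\one^T)\one]_i| \leq C\sqrt{n}\lambda^t$ paired with the crude deterministic bound $\|\one^T X(t)\| \leq \alpha(t-1)nD$, and the product is controlled by the uniform envelope $\sup_{t\ge1} t\lambda^t \leq 1/(e(1-\lambda))$, again of order $C^2\alpha^2 nD^2/(1-\lambda)^2$. Summing the three pieces via $\|a+b+c\|^2 \leq 3(\|a\|^2+\|b\|^2+\|c\|^2)$ and dividing by $\delta^2$ delivers the claimed inequality up to constants.

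The main obstacle I anticipate is the cross-term $\tfrac{1}{n}[(A^t-\bm\phi\one^T)\one]_i\cdot\one^T X(t)$. Because $\one^T X(t)$ accumulates all past stochastic gradients, it grows linearly in $t$ and admits no time-uniform bound on its own; only the exponential decay of $\|A^t-\bm\phi\one^T\|$ offsets that growth, and recovering the $1/(1-\lambda)^2$ prefactor appearing in the stated bound requires the envelope $\sup_t t\lambda^t \leq 1/(e(1-\lambda))$. Alternatively one could substitute $\one^T X(t) = -\alpha\sum_{s=0}^{t-2}\one^T\partial F$ directly and merge this cross-term into a redefined gradient tail before applying Cauchy--Schwarz, which sidesteps the envelope argument at the cost of extra bookkeeping on the indices.
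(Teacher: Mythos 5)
Your decomposition is, modulo notation, exactly the paper's: the paper likewise writes $\z_i(t+1)-\frac{\one^T X(t)}{n}$ as the unrolled $W$-recursion minus $\bm\phi\one^T X(t)$, divided by $y_i(t+1)=[A^t\one]_i\ge\delta$ (Proposition \ref{lem:propo}), splits the numerator into the same three pieces --- the quantization tail $\sum_s [A^s(A-I)]_i(\Xh(t-s+1)-X(t-s))$, the gradient tail against $[A^s-\bm\phi\one^T]_i$, and the cross-term $\tfrac{1}{n}[(A^t-\bm\phi\one^T)\one]_i\cdot\one^T X(t)$ (see \eqref{eq:consensuserror}) --- and finishes with Lemma \ref{lem:iteratesnonconvex} (where the hypothesis on $\la_2$ enters, as in your plan) and $\E\|\partial F\|^2\le nD^2$. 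The only genuine divergence is the cross-term, and there the paper takes precisely what you call the ``alternative'': it substitutes $\one^T X(t)=-\al\sum_s\one^T\partial F(Z(t-s+1),\zeta_{t-s+1})$ and uses $\|A^t-\bm\phi\one^T\|\le C\la^t\le C\la^s$ for $s\le t$ to fold the cross-term into a geometric sum bounded by $\frac{2nC^2}{1-\la}\sum_s\la^s\,\E\|\partial F\|^2$, fully parallel to the other two pieces; no envelope argument is needed, and the index bookkeeping you anticipate is negligible.

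Two repairs to your primary route. First, the ``crude deterministic bound $\|\one^T X(t)\|\le\al(t-1)nD$'' does not exist: Assumption \ref{assumption:boundedgrad} controls the stochastic gradients only in second moment, so no almost-sure bound is available. The fix is one line --- the cross-term is a deterministic scalar $\tfrac{1}{n}[(A^t-\bm\phi\one^T)\one]_i$ times the random vector $\one^T X(t)$, and the triangle inequality in $L^2$ gives $\left(\E\|\one^T X(t)\|^2\right)^{1/2}\le\al(t-1)nD$, which combined with your (correct) envelope $\sup_{t\ge1}t\la^t\le \frac{1}{e(1-\la)}$ yields the desired $O\!\left(C^2\al^2 nD^2/(1-\la)^2\right)$ contribution. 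Second, you conclude only ``up to constants,'' whereas the lemma asserts the specific constant $\frac{6C^2\al^2}{\delta^2(1-\la)^2}(2nD^2+\xi_2)$; with your factor $2C\la^s$ for $\|A^s(A-I)\|$ and the factor $3$ from splitting three squares, your constants come out larger, so reproducing the stated bound verbatim requires the paper's bookkeeping (the $xy\le x^2/2+y^2/2$ expansion of cross-products with $\la^{2s}\le\frac{\la^s}{1-\la}$, and the final simplification $1+1/n\le2$, cf.\ \eqref{eq:cons_error}). Neither issue affects the soundness of the overall argument.
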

\begin{proof}
Using the update rule in Algorithm \ref{alg:Opt_matrix} and similar to \eqref{eq:z} we derive the following for the vector corresponding to consensus error of node $i$ :
\begin{align*}
&\z_{i}\left(t+1\right)-\frac{\one^T X(t)}{n}= \\
&\frac{\Big[\sum_{s=0}^{t-1} A^{s}(A-I)(\Xh(t-s+1)-X(t-s))-\alpha \sum_{s=0}^{t} A^s \partial F\Big(Z(t-s+1),\zeta_{t-s+1}\Big)\Big]_{i} }{\left[(A^{t}-\bm\phi \one^T) \one\right]_{i}+\bm\phi_{i}n} \\ 
&+\frac{\alpha \one^T \sum_{s=0}^{t-1} \partial F\Big(Z(t-s+1),\zeta_{t-s+1}\Big)\left(\left[\left(A^{t}-\bm\phi \one^T\right) \one\right]_{i}+\bm\phi_{i}n\right)}{n\left(\left[(A^{t}-\bm\phi \one^T) \one\right]_{i}+\bm\phi_{i}n\right)}.
\end{align*}
Note that by Proposition \ref{lem:propo} for all $t\ge1$ we have $[(A^{t}-\bm\phi \one^T) \one]_{i}+\bm\phi_{i}n = [A^t \one]_i \ge \delta$,  which yields the following for squared norm of consensus error:
\begin{align}\label{eq:consensuserror}
\begin{split}
 \mathbb{E}&\left\|\z_{i}(t+1)-\frac{\one^T X(t)}{n}\right\|^{2} 
\leq \frac{3}{\delta^{2}} \mathbb{E}\bigg\|\sum_{s=0}^{t}\Big[A^{s}(A-I)\Big]_{i}\Big(\Xh(t-s+1)-X(t-s)\Big)\bigg\|^{2} \\
+&\frac{3 \alpha^{2}}{\delta^{2}} \mathbb{E}\left\|\sum_{s=0}^{t}\left[A^{s}-\bm\phi \one^T\right]_{i} \partial F\Big(Z(t-s+1),\zeta_{t-s+1}\Big)\right\|^{2}\\
+&\frac{3 \alpha^{2}}{n^{2} \delta^{2}} \mathbb{E}\left\|\one^T\left(\sum_{s=0}^{t-1}\left[A^{t}-\bm\phi \one^T\right]_{i} \partial F\Big(Z\left(t-s+1\right),\zeta_{t-s+1}\Big)\right)\right\|^{2} .
\end{split}
\end{align}
By expanding the first term in the RHS of \eqref{eq:consensuserror} we derive
\begin{align*}
\begin{split}
&\left\|\sum_{s=0}^{t}\Big[A^{s}(A-I)\Big]_{i}\Big(\Xh(t-s+1)-X(t-s)\Big)\right\|^{2} =\sum_{s=0}^{t}\bigg\|\Big[A^{s}(A-I)\Big]_{i}\Big(\Xh(t-s+1)-X(t-s)\Big)\bigg\|^{2}\\
&+\sum_{s \neq s^{\prime}}^{t}\bigg\langle\Big[A^{s}(A-I)\Big]_{i}\Big(\Xh(t-s+1)-X(t-s)\Big),\Big[A^{s}(A-I)\Big]_{i}\left(\Xh\left(t-s^{\prime}+1\right)-X\left(t-s^{\prime}\right)\right)\bigg\rangle \\
&\le \sum_{s=0}^{t}\bigg\|\Big[A^{s}(A-I)\Big]_{i}\bigg\|^2\bigg\|\Xh(t-s+1)-X(t-s)\bigg\|^{2}\\
&+\sum_{s \neq s^{\prime}}^{t}\bigg\|\Big[A^{s}(A-I)\Big]_{i}\bigg\|\bigg\|\Xh(t-s+1)-X(t-s)\bigg\|\bigg\|\Big[A^{s}(A-I)\Big]_{i}\bigg\|\bigg\|\left(\Xh\left(t-s^{\prime}+1\right)-X\left(t-s^{\prime}\right)\right)\bigg\|.
\end{split}
\end{align*}
Using the relation $x\cdot y \le x^2/2 +y^2/2$ for all $x,y \in \R$, this inequality reduces to the following: 
\begin{align*}
\begin{split}
&\left\|\sum_{s=0}^{t}\Big[A^{s}(A-I)\Big]_{i}\Big(\Xh(t-s+1)-X(t-s)\Big)\right\|^{2} \nonumber\\
&\le  \sum_{s=0}^{t}\bigg\|\Big[A^{s}(A-I)\Big]_{i}\bigg\|^2\bigg\|\Xh(t-s+1)-X(t-s)\bigg\|^{2}\\
&+ \frac{1}{2}\sum_{s \neq s^{\prime}}^{t}\bigg\|\Big[A^{s}(A-I)\Big]_{i}\bigg\|\bigg\|\Big[A^{s'}(A-I)\Big]_{i}\bigg\|\left(\bigg\|\Xh(t-s+1)-X(t-s)\bigg\|^2 + \bigg\|\Xh(t-s'+1)-X(t-s')\bigg\|^2\right).
\end{split}
\end{align*}
Next we use Proposition \ref{lem:propo} to yield that 
\begin{align*}
\begin{split}
&\left\|\sum_{s=0}^{t}\Big[A^{s}(A-I)\Big]_{i}\Big(\Xh(t-s+1)-X(t-s)\Big)\right\|^{2}\le \\
&C^2 \sum_{s=0}^{t}\la^{2s}\bigg\|\Xh(t-s+1)-X(t-s)\bigg\|^{2} + C^2\sum_{s\neq s'}^{t}\la^{s+s'}\bigg\|\Xh(t-s+1)-X(t-s)\bigg\|^{2}\\
&\le C^2 \sum_{s=0}^{t}\Big(\la^{2s}+\frac{\la^s}{1-\la}\Big)\bigg\|\Xh(t-s+1)-X(t-s)\bigg\|^{2} \\
&\le \frac{2C^2}{1-\la}\sum_{s=0}^t\la^s\bigg\|\Xh(t-s+1)-X(t-s)\bigg\|^{2} ,
\end{split}
\end{align*}
where we used $\la^{2s} \le \frac{\la^s}{1-\la}$ to derive the last inequality.
Using the same approach for the second term in the RHS of \eqref{eq:consensuserror}, we derive the following upper bound: 
\begin{align*}
    \mathbb{E}\left\|\sum_{s=0}^{t}\left[A^{s}-\bm\phi \one^T\right]_{i} \partial F\Big(Z(t-s+1),\zeta_{t-s+1}\Big)\right\|^{2}\le \frac{2 C^{2}}{1-\lambda} \sum_{s=0}^{t} \lambda^{s}\E\Big\|\partial F(Z(t-s+1),\zeta_{t-s+1})\Big\|^{2}.
    \end{align*}
To bound the third term in the RHS of \eqref{eq:consensuserror}, we use the same method,as well as the fact that $\|A^t-\bm \phi \one^T\| \le \la^t \le \la^s$ for all $s\le t$ to deduce that  
    \begin{align*}
    \mathbb{E}\left\|\one^T\left(\sum_{s=0}^{t-1}\left[A^{t}-\bm\phi \one^T\right]_{i} \partial F\Big(Z\left(t-s+1\right),\zeta_{t-s+1}\Big)\right)\right\|^{2}\le\frac{2n C^{2}}{1-\lambda} \sum_{s=0}^{t} \lambda^{s}\E\Big\|\partial F(Z(t-s+1),\zeta_{t-s+1})\Big\|^{2}.
\end{align*}

Replacing these back in \eqref{eq:consensuserror} gives   
\begin{align}\label{eq:consensuserror2}
 &\mathbb{E}\left\|\z_{i}(t+1)-\frac{\one^T X(t)}{n}\right\|^{2} \nonumber\\
 &\le\frac{6C^2}{\delta^2(1-\la)}\sum_{s=0}^t\la^s\E\bigg\|\Xh(t-s+1)-X(t-s)\bigg\|^{2} \nonumber\\
 &\qquad + \left(\frac{6\al^2C^2}{\delta^2(1-\la)} + \frac{6\al^2C^2}{n\delta^2(1-\la)}\right)\sum_{s=0}^{t} \lambda^{s}\E\Big\|\partial F(Z(t-s+1),\zeta_{t-s+1})\Big\|^{2}.
\end{align}
Note that $U(t-s):=\E\,\bigg\|\Xh(t-s+1)-X(t-s)\bigg\|^{2}\le\xi_2\al^2 $ by Lemma \ref{lem:iteratesnonconvex}. Also by bounded stochastic gradient property in Assumption \ref{assumption:boundedgrad}, we have $\E \,\Big\|\partial F(Z(t-s+1),\zeta_{t-s+1})\Big\|^{2}\le nD^2.$ Therefore we conclude the following from \eqref{eq:consensuserror2}:
\begin{align}\label{eq:cons_error}
\mathbb{E}\left\|\z_{i}(t+1)-\frac{\one^T X(t)}{n}\right\|^{2} \le \frac{6\,\xi_2\, C^2\alpha^2}{\delta^2(1-\la)^2} + \left(\frac{6\al^2C^2}{\delta^2(1-\la)^2} + \frac{6\al^2C^2}{n\delta^2(1-\la)^2}\right)nD^2.
\end{align}
Simplifying relations with $1+1/n\le 2$,  yields the desired inequality in the statement of the lemma.
\end{proof}
We continue with proving the next lemma which relates the consensus error as stated in Lemma \ref{lem:consensus_err} to the error of global objective function $f$ evaluated at the average of parameters $\x_i(t)$ of all nodes. 
\begin{lem}\label{lem:cons_to_f}
For all $t \geq 1$, iterations of Algorithm \ref{alg:Opt_matrix} satisfy:
\begin{align*}
\E\,\Big\|\bar X(t+1)-\z^\star\Big\|^{2} \leq\E\,\Big\|\bar{X}(t)-\z^\star\Big\|^{2}- &\left(2\alpha - \frac{8\alpha^2 L}{n}\right)\E(f\left(\bar{X}(t))-f(\z^\star)\right) \\ 
&+\frac{2\alpha L + 4L^2\alpha^2}{n} \sum_{i=1}^{n} \E\,\Big\|\z_{i}(t+1)-\bar{X}(t)\Big\|+\frac{2\sig^2\alpha^2}{n}.
\end{align*}
\end{lem}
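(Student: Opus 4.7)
The plan is to obtain a per-step recursion on $\E\|\bar X(t+1)-\z^\star\|^2$ by exploiting the column-stochasticity of $A$, applying unbiasedness and the variance bound of the stochastic gradients, and then combining convexity of each $f_i$ with $L$-smoothness. This is the standard perturbed-SGD descent lemma, adapted so that the local gradients are evaluated at $\z_i(t+1)$ rather than at $\bar X(t)$.

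First I would derive the averaged update. From the matrix iterations \eqref{alg:Opt_matrix}, $\bar X(t+1) = \bar W(t+1) - \frac{\alpha}{n}\sum_{i=1}^n \nabla F_i(\z_i(t+1),\zeta_{i,t+1})$, and since $\one^T A = \one^T$ the averaging term in $W$ drops out, giving $\bar W(t+1) = \bar X(t)$. Expanding the squared distance and taking conditional expectation, unbiasedness collapses the cross term to $-\frac{2\alpha}{n}\sum_i \langle \nabla f_i(\z_i(t+1)),\bar X(t)-\z^\star\rangle$, while the squared-norm term is controlled via $(a+b)^2 \le 2a^2+2b^2$ and Assumption \ref{assumption:boundedvar}:
\begin{equation*}
\frac{\alpha^2}{n^2}\,\E\Big\|\sum_{i=1}^n \nabla F_i\Big\|^2 \;\le\; \frac{2\alpha^2}{n^2}\Big\|\sum_{i=1}^n \nabla f_i(\z_i(t+1))\Big\|^2 + \frac{2\alpha^2\sigma^2}{n}.
\end{equation*}

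For the cross term I would split $\bar X(t)-\z^\star = (\bar X(t)-\z_i(t+1))+(\z_i(t+1)-\z^\star)$. Convexity of $f_i$ handles the second piece as $\langle\nabla f_i(\z_i(t+1)),\z_i(t+1)-\z^\star\rangle \ge f_i(\z_i(t+1))-f_i(\z^\star)$, and for the first piece I would add and subtract $\nabla f_i(\bar X(t))$: the Cauchy--Schwarz-plus-$L$-Lipschitz part yields $L\|\z_i(t+1)-\bar X(t)\|^2$ and the $\nabla f_i(\bar X(t))$ part yields $f_i(\z_i(t+1))-f_i(\bar X(t))$ by convexity. Summing over $i$ collapses the $f_i(\z_i(t+1))$ contributions and leaves $-2\alpha(f(\bar X(t))-f(\z^\star))+\frac{2\alpha L}{n}\sum_i\|\z_i(t+1)-\bar X(t)\|^2$. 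For the remaining squared-gradient term I would use Jensen plus $L$-smoothness to write $\|\sum_i\nabla f_i(\z_i(t+1))\|^2 \le 2nL^2\sum_i\|\z_i(t+1)-\bar X(t)\|^2 + 2n^2\|\nabla f(\bar X(t))\|^2$, and then invoke the standard convex $L$-smooth bound $\|\nabla f(\bar X(t))\|^2 \le 2L(f(\bar X(t))-f(\z^\star))$. Assembling everything produces the $-(2\alpha - \frac{8\alpha^2 L}{n})$ coefficient on $f(\bar X(t))-f(\z^\star)$, the $\frac{2\alpha L+4\alpha^2 L^2}{n}$ coefficient on the consensus sum, and the $\frac{2\alpha^2\sigma^2}{n}$ noise term.

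The main obstacle is purely bookkeeping: the factors $2$, $4$, $8$ in the statement must be tracked carefully through the $(a+b)^2 \le 2a^2+2b^2$ step, the Jensen bound on $\|\sum_i v_i\|^2$, and the ``gradient dominates suboptimality'' inequality for $L$-smooth convex $f$, with no more than one application of each. (I also expect the consensus sum in the printed statement to involve $\|\z_i(t+1)-\bar X(t)\|^2$ rather than $\|\z_i(t+1)-\bar X(t)\|$, since the squared norm is what naturally drops out of every step of the argument.) No new ideas beyond the classical decentralized SGD descent lemma are needed.
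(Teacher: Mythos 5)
Your proposal is correct and follows essentially the same route as the paper's proof: the same averaged update via $\one^T A=\one^T$, the same add-and-subtract variance splitting of the stochastic gradient term, the same decomposition $\bar X(t)-\z^\star=(\bar X(t)-\z_i(t+1))+(\z_i(t+1)-\z^\star)$ with convexity at $\z_i(t+1)$, and the same Jensen-plus-Lipschitz bound combined with $\|\nabla f(\bar X(t))\|^2\le 2L\left(f(\bar X(t))-f(\z^\star)\right)$, the only micro-difference being that you handle $\langle\nabla f_i(\z_i(t+1)),\bar X(t)-\z_i(t+1)\rangle$ via Cauchy--Schwarz plus $L$-Lipschitzness (giving constant $L$) where the paper invokes the $L$-smoothness quadratic bound (giving $L/2$), and either fits under the stated $\frac{2\alpha L}{n}$ coefficient. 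You are also right that the consensus term in the printed statement should read $\|\z_i(t+1)-\bar X(t)\|^2$: the paper's own proof and all subsequent uses of the lemma carry the squared norm.
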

\begin{proof}
First we recall that $A$ is column stochastic; thus, it yields that $\one^T\left(A-I\right) = 0$. Therefore, iterations of  Algorithm \ref{alg:Opt_matrix} yield that
\begin{align*}
\bar{X}(t+1)=\bar{X}(t)-\frac{\alpha}{n} \sum_{i=1}^{n} \nabla F_{i}(\z_i(t+1),\zeta_{i,t+1}).
\end{align*}
Thus, 
\begin{equation}\label{eq:lemmain}
\begin{aligned}
\E\,\left\|\bar{X}(t+1)-\z^\star\right\|^{2}=&\E\,\left\|\bar{X}(t)-\z^\star\right\|^{2} 
-\frac{2 \alpha}{n} \sum_{i=1}^{n} \E\,\Big\langle \nabla f_{i}\left(\z_i (t+1)\right),\bar{X}(t)-\z^\star\Big\rangle \\&+\alpha^{2}\,\E\,\Big\|\frac{1}{n}\sum_{i=1}^{n} \nabla F_{i}(\z_i (t+1),\zeta_{i,t+1})\Big\|^{2}.
\end{aligned}
\end{equation}
We derive an upper bound for the third term by adding and subtracting $\frac{1}{n}\sum_{i=1}^n\nabla f_i(\z_i(t+1))$ : 
\begin{align}
\E\Big\|\frac{1}{n}\sum_{i=1}^{n} \nabla F_{i}(\z_i (t+1),\zeta_{i,t+1})\Big\|^{2} &\le2\E \Big\|\frac{1}{n}\sum_{i=1}^{n} \left(\nabla F_{i}(\z_i (t+1),\zeta_{i,t+1})-\nabla f_i(\z_i(t+1))\right) \Big\|^{2} \nn\\&\quad + 2\E \Big\|\frac{1}{n}\sum_{i=1}^{n} \nabla f_i(\z_i(t+1))\Big\|^{2}\nn\\
&\le \frac{2}{n^2}\sum_{i=1}^n\E\Big\| \nabla F_{i}(\z_i (t+1),\zeta_{i,t+1})-\nabla f_i(\z_i(t+1))\Big\|^2\nn\\
&\quad +2\E \Big\|\frac{1}{n}\sum_{i=1}^{n} \nabla f_i(\z_i(t+1))\Big\|^{2}\le\frac{2\sig^2}{n} + 2\E \Big\|\frac{1}{n}\sum_{i=1}^{n} \nabla f_i(\z_i(t+1))\Big\|^{2}.
\label{eq:nablaf}
\end{align}
For the last term in \eqref{eq:nablaf} we have
\begin{align*}
\E \Big\|\frac{1}{n}\sum_{i=1}^{n} \nabla f_i(\z_i(t+1))\Big\|^{2} &\le 2\E \left\|\frac{1}{n}\sum_{i=1}^{n} (\nabla f_i(\z_i(t+1))-\nabla f_i(\bar{X}(t)))\right\|^{2} + 2\E \left\|\frac{1}{n}\sum_{i=1}^{n} (\nabla f_i(\bar{X}(t))-\nabla f_i(\z^\star))\right\|^{2}  \\ 
&\le \frac{2}{n}\sum_{i=1}^n\E\,\bigg\|\nabla f_i(\z_i(t+1))-\nabla f_i(\bar{X}(t))\bigg\|^2 + 2\E\bigg\|\nabla f(\bar{X}(t))-\nabla f(\z^\star)\bigg\|^2 \\
&\le \frac{2L^2}{n}\sum_{i=1}^n \E\Big\|\z_i(t+1)- \bar{X}(t)\Big\|^2 + \frac{4L}{n}(\E f(\bar{X}(t))-f(\z^\star)).
\end{align*}
Replacing this back in \eqref{eq:nablaf} we have 
\begin{align}\label{eq:nablaffinal}
\E\left\|\frac{1}{n}\sum_{i=1}^{n} \nabla F_{i}(\z_i (t+1),\zeta_{i,t+1})\right\|^{2} \le \frac{2\sig^2}{n} +  \frac{4L^2}{n}\sum_{i=1}^n \E\Big\|\z_i(t+1)- \bar{X}(t)\Big\|^2 + \frac{8L}{n}(\E f(\bar{X}(t))-f(\z^\star)).
\end{align}
Next, we will achieve a bound for the second term in the RHS of \eqref{eq:lemmain}. By adding and subtracting $\z_i(t+1)$  we have 
\begin{align}\label{eq:leml}
\E\Big\langle \nabla f_i(\z_i(t+1)),\bar{X}(t)-\z^\star\Big\rangle  =  \E\Big\langle \nabla f_i(\z_i(t+1)),\bar{X}(t)-\z_i(t+1)\Big\rangle + \E\Big\langle \nabla f_i(\z_i(t+1)),\z_i(t+1)-\z^\star\Big\rangle,
\end{align}
where after using $L$-smoothness of the function $f_i$ for the first term and convexity of the function $f_i$ for the second term of \eqref{eq:leml} we deduce that
\begin{align*}
\E\,\Big\langle \nabla f_i(\z_i(t+1)),\bar{X}(t)-\z^\star\Big\rangle  &\ge \E f_i(\bar{X}(t)) - \E f_i(\z_i(t+1)) - \frac{L}{2}\E\Big\|\bar{X}(t) - \z_i(t+1)\Big\|^2 + \E f_i(\z_i(t+1)) - f_i(\z^\star) \\
&=  \E f_i(\bar{X}(t)) - f_i(\z^\star)  - \frac{L}{2}\E\Big\|\bar{X}(t) - \z_i(t+1)\Big\|^2.
\end{align*}
Using this and recalling that $f(\cdot) = \frac{1}{n} \sum_{i=1}^n f_i(\cdot)$, we derive the following:
$$
-\frac{2\alpha}{n}\sum_{i=1}^n \E\,\Big\langle \nabla f_i(\z_i(t+1)),\bar{X}(t)-\z^\star\Big\rangle \leq -2\al(\E\,f(\bar{X}(t)) - f(\z^\star))+\frac{2\al L}{n} \sum_{i=1}^n \E\Big\|\bar{X}(t) - \z_i(t+1)\Big\|^2.
$$
By replacing this result and \eqref{eq:nablaffinal} in \eqref{eq:lemmain} we derive the desired inequality in the statement of the lemma.
\end{proof}
We continue with rearranging and summing both sides of Lemma \ref{lem:cons_to_f} for $t= 1 ,\cdots,T$ to find the following:
\begin{align*}
& \left(2 \alpha-\frac{8\alpha^2L}{n}\right)\sum_{t=1}^{T}(\E f(\bar{X}(t))-f(\z^\star))\nonumber\\
& \leq\Big \|\bar{X}(1)-\z^\star\Big\|^2 +  \frac{2\alpha L + 4\alpha^2 L^2}{n} \sum_{t=1}^T\sum_{i=1}^n \E\,\Big\|\z_i(t+1)-\bar{X}(t)\Big\|^2 +\sum_{t=1}^T \frac{2\sig^2\alpha^2}{n}.
\end{align*}
Scaling both sides, as well as noting the Assumption \ref{assumption:initialization} i.e. $X(1) = 0$,  we find that  
\begin{align*}
&\frac{1}{T}\sum_{t=1}^{T}(\E f(\bar{X}(t))-f(\z^\star)) \\
&\leq\frac{1}{2T\al(1-\frac{4\al L}{n})}\left \|  \z^\star \right\|^2 + \frac{2\al L^2 + L}{nT(1-\frac{4\al L}{n})}\sum_{t=1}^T \sum_{i=1}^n \E\Big\|\z_i(t+1)-\bar{X}(t)\Big\|^2 + \frac{\alpha \sig^2}{n(1-\frac{4\al L}{n})}.
\end{align*}
Using convexity of $f(\cdot)$, the above inequality simplifies to the following:
\begin{align*}
&\E\,f\left(\frac{1}{T}\sum_{t=1}^T\bar{X}(t)\right)-f(\z^\star)
 \\& \leq \frac{1}{2T\al(1-\frac{4\al L}{n})}\left \|  \z^\star \right\|^2 + \frac{2\al L^2 + L}{nT(1-\frac{4\al L}{n})}\sum_{t=1}^T \sum_{i=1}^n \E\Big\|\z_i(t+1)-\bar{X}(t)\Big\|^2 + \frac{\alpha \sig^2}{n(1-\frac{4\al L}{n})},
\end{align*}
which after replacing the consensus error from Lemma \ref{lem:consensus_err}, further simplifies into 
\begin{align}\label{eq:ave}
&\E f\left(\frac{1}{T}\sum_{t=1}^T\bar{X}(t)\right)-f(\z^\star) \nonumber\\
&\leq\frac{1}{2T\al(1-\frac{4\al L}{n})}\left \|  \z^\star \right\|^2   + 
 \frac{6\al^2C^2\left(2nD + \xi_2\right)}{\delta^2(1-\la)^2} \cdot \frac{2\al L^2 + L}{1-\frac{4\al L}{n}} + \frac{\alpha \sig^2}{n(1-\frac{4\al L}{n})}.
\end{align}
The inequality in \eqref{eq:ave} guarantees the convergence of time average of $\bar{X}(t) = \frac{1}{n}\sum_{i=1}^n \x_i(t) $ to the optimal point $\z^\star$. However computing the average of $\x_i(t)$ between workers in every iteration is time consuming since it can not be done in the decentralized setting. Next we show that the time average of the local variables, $\z_i(t)$ converges to an optimal $\z^\star$,  for every node $i$ . First, By $L$-smoothness of the function $f(\cdot)$ as well as the inequality $\langle \x,\y\rangle \le \frac{1}{2}\|\x\|^2 + \frac{1}{2}\|\y\|^2$, we derive for all $i \in [n]$ it holds that
\begin{align}
&f\left( \frac{1}{T} \sum_{t=1}^T \z_i(t+1)\right)- f\left( \frac{1}{T} \sum_{t=1}^T \bar{X}(t)\right)  \nn\\
&\le \left\langle \frac{1}{T}\sum_{t=1}^T \z_i(t+1) - \bar{X}(t), \nabla f\left(\frac{1}{T} \sum_{t=1}^T \bar{X}(t)\right)  \right\rangle + \frac{L}{2T^2} \left\|\sum_{t=1}^T\z_i(t+1)-\bar{X}(t)\right\|^2  \nn\\
&\le \frac{1}{2T^2}\left\|\sum_{t=1}^T\z_i(t+1)-\bar{X}(t) \right\|^2 + \frac{1}{2}\left\|\nabla f\left(\frac{1}{T} \sum_{t=1}^T \bar{X}(t)\right) \right\|^2 +  \frac{L}{2T^2} \left\|\sum_{t=1}^T\z_i(t+1)-\bar{X}(t)\right\|^2  .\label{eq:subtract}
\end{align}
Note that for the optimal solution $\z^\star$ it holds that $\|\nabla f(\x)\|^2=\|\nabla f(\x) - \nabla f(\z^\star)\|^2 \le 2L \left(f(\x) - f(\z^\star)\right)$. Using this inequality for the second term in \eqref{eq:subtract} we conclude that  
\begin{align}\label{eq:lip}
&\E\,f\left( \frac{1}{T} \sum_{t=1}^T \z_i(t+1)\right)- \E\,f\left( \frac{1}{T} \sum_{t=1}^T \bar{X}(t)\right)  \nn \\
&\le \left(\frac{1}{2T} + \frac{L}{2T}\right)\sum_{t=1}^T\E\,\Big\|\z_i(t+1)-\bar{X}(t) \Big\|^2 + L\left(\E\,f\left( \frac{1}{T} \sum_{t=1}^T \bar{X}(t)\right) - f(\z^\star)\right).
\end{align}
By replacing the consensus error as derived in \eqref{eq:cons_error} and combining the inequalities \eqref{eq:ave} and \eqref{eq:lip}, we get the convergence error of the time average of local variables $\z_i$ :
\begin{align}\label{eq:convexfinal}
&\E f\left(\frac{1}{T}\sum_{t=1}^T\z_i(t+1)\right) - f(\z^\star) \nonumber\\
&\le \;\frac{L+1}{2T\al(1-\frac{4\al L}{n})}\left \|  \z^\star \right\|^2   + \left(\frac{(2\al L^2 + L)(L+1)}{1-\frac{4\al L}{n}}+\frac{1+L}{2} \right) \left( \frac{6\al^2C^2\left(2nD + \xi_2\right)}{\delta^2(1-\la)^2}\right) + \frac{\alpha \sig^2 (L+1)}{n(1-\frac{4\al L}{n})} .
\end{align}
We choose $\alpha= \frac{\sqrt{n}}{8L\sqrt{T}}$ in \eqref{eq:convexfinal}, which results in $\left(1-\frac{4\al L}{n}\right)^{-1}\le2$ for all $T\ge1$ and $n\ge1$. Furthermore, 
\begin{align}\label{eq:convexfinal2}
&\E f\left(\frac{1}{T}\sum_{t=1}^T\z_i(t+1)\right) - f(\z^\star) \nonumber\\
&\le \;\frac{8L(L+1)}{\sqrt{nT}}\left \|  \z^\star \right\|^2   + \frac{\sig^2(L+1)}{4\,L\sqrt{nT}} +\frac{C^2 n \left(\left(L+1\right)\left(\frac{L\sqrt{n}}{2\sqrt{T}}+L+1\right)\right) \left(\xi_2+2nD^2\right)}{10\,T\delta^2(1-\la)^2L^2 }.
\end{align}
Replacing $\xi_2$ and rewriting the condition on $\la_2$ with $\omega$ (as required by Lemmas \ref{lem:iteratesnonconvex} and \ref{lem:consensus_err} ) completes the proof. 
\section{Proof of Theorem \ref{thm:nonconvex} : Quantized Push-sum with Non-convex Objectives}
Using the $L$-smoothness of the global objective function which is implied by Assumption \ref{assumption:lipshitz}, we have for all $t\ge1$:
\begin{align}\label{eq:lsmooth}
\begin{split}
\mathbb{E} f\left(\frac{\one^T X(t+1)}{n}\right)=&\,\mathbb{E}\, f\left(\frac{\one^T X(t)}{n}-\alpha \frac{\one^T \partial F(Z(t+1,\zeta_{t+1}))}{n}\right)\\
= &\,\mathbb{E}\, f\left(\frac{\one^T X(t)}{n}\right)-\alpha\,\mathbb{E}\left\langle\nabla f\left(\frac{\one^TX(t)}{n}\right), \frac{\one^T \partial f(Z(t+1))}{n}\right\rangle \\
&+\frac{\alpha^{2} L}{2}\mathbb\, \E\left\|\frac{\one^T \partial F(Z(t+1,\zeta_{t+1}))}{n}\right\|^{2}.
\end{split}
\end{align}
For the last term in the RHS of \eqref{eq:lsmooth} we add and subtract $\frac{\one^T}{n} \partial f(Z(t+1))$ to yield 
\begin{align*}
\mathbb{E} \left\| \frac{\one^T \partial F(Z(t+1),\zeta_{t+1})}{n}
\right\|^2=&
\mathbb{E} \left\|\frac{\one^T}{n}\Big(\partial F(Z(t+1),\zeta_{t+1})-\partial f(Z(t+1))\Big)+\frac{\one^T}{n} \partial f(Z(t+1))\right\|^{2} \\
=&\mathbb{E}\left\|\frac{\one^T}{n}\Big(\partial F(Z(t+1),\zeta_{t+1})-\partial f(Z(t+1))\Big)\right\|^2 + \mathbb{E}\left\|\frac{\one^T}{n} \partial f(Z(t+1))\right\|^{2} \\
&\quad +2\, \mathbb{E} \left\langle \frac{\one^T}{n}\Big(\partial F(Z(t+1),\zeta_{t+1})-\partial f(Z(t+1))\Big), \frac{\one^T}{n} \partial f\left(Z(t+1)\right)\right\rangle .
\end{align*}
Since stochastic gradients of all nodes are unbiased estimators of the local gradients, the last term is zero in expectation. Thus,
\begin{align*}
\mathbb{E} \left\| \frac{\one^T \partial F(Z(t+1),\zeta_{t+1})}{n}
\right\|^2 \le\,&\mathbb{E}\left\|\frac{\one^T}{n}\Big(\partial F(Z(t+1),\zeta_{t+1})-\partial f(Z(t+1))\Big)\right\|^2 + \mathbb{E}\left\|\frac{\one^T}{n} \partial f(Z(t+1))\right\|^{2} \\+&
2\mathbb{E} \left\langle \E_{\zeta_{t+1}}\frac{\one^T}{n}\Big(\partial F(Z(t+1),\zeta_{t+1})-\partial f(Z(t+1))\Big), \frac{\one^T}{n} \partial f\left(Z(t+1)\right)\right\rangle  \\
=&\,\mathbb{E}\left\|\frac{\one^T}{n}\Big(\partial F(Z(t+1),\zeta_{t+1})-\partial f(Z(t+1))\Big)\right\|^2 + \mathbb{E}\left\|\frac{\one^T}{n} \partial f(Z(t+1))\right\|^{2} .
\end{align*}
Next, by expanding and using the fact that stochastic gradients are computed independently among different nodes, we show that the first term in the equation above is bounded: 
\begin{align*}
&\mathbb{E}\left\|\frac{\one^T}{n}\Big(\partial F(Z(t+1),\zeta_{t+1})-\partial f(Z(t+1))\Big)\right\|^2 =\frac{1}{n^{2}} \mathbb{E}\left\|\sum_{i=1}^{n} \nabla F_{i}\left(\z_{i}(t+1),\zeta_{i,t+1}\right)-\nabla f_{i}\left(\z_{i}(t+1)\right)\right\|^{2}\\
&=\frac{1}{n^{2}} \mathbb{E} \sum_{i=1}^{n}\Big\|\nabla F_{i}\left(\z_{i}(t+1),\zeta_{i,t+1}\right)-\nabla f_{i}\left(\z_{i}(t+1)\right)\Big\|^{2} \\
&+ \frac{1}{n^2} \E \sum_{i\neq i'}\bigg\langle \nabla F_{i}\left(\z_{i}(t+1),\zeta_{i,t+1}\right)-\nabla f_{i}\left(\z_{i}(t+1)\right), \nabla F_{i'}\left(\z_{i'}(t+1),\zeta_{i',t+1}\right)-\nabla f_{i'}\left(\z_{i'}(t+1)\right)\bigg\rangle\\
&= \frac{1}{n^{2}} \mathbb{E} \sum_{i=1}^{n}\Big\|\nabla F_{i}\left(\z_{i}(t+1),\zeta_{i,t+1}\right)-\nabla f_{i}\left(\z_{i}(t+1)\right)\Big\|^{2} \\
&+ \frac{1}{n^2} \E \sum_{i\neq i'}\bigg\langle\E_{\zeta_{i,t+1}} \nabla F_{i}\left(\z_{i}(t+1),\zeta_{i,t+1}\right)-\nabla f_{i}\left(\z_{i}(t+1)\right), \nabla F_{i'}\left(\z_{i'}(t+1),\zeta_{i',t+1}\right)-\nabla f_{i'}\left(\z_{i'}(t+1)\right)\bigg\rangle \\
&= \frac{1}{n^{2}} \mathbb{E} \sum_{i=1}^{n}\Big\|\nabla F_{i}\left(\z_{i}(t+1),\zeta_{i,t+1}\right)-\nabla f_{i}\left(\z_{i}(t+1)\right)\Big\|^{2}  \le \frac{\sig^2}{n},
\end{align*}
where we recall Assumption \ref{assumption:boundedvar} in the last inequality. Next, we rewrite \eqref{eq:lsmooth} using the new terms as follows: 
\begin{align}\label{eq:lsmooth2}
\begin{split}
\mathbb{E} f\left(\frac{\one^T X(t+1)}{n}\right) &\le \mathbb{E} f\left(\frac{\one^T X(t)}{n}\right) -\alpha\,\mathbb{E}\left\langle\nabla f\left(\frac{\one^TX(t)}{n}\right), \frac{\one^T \partial f(Z(t+1))}{n}\right\rangle \\ 
&\qquad \qquad +\frac{\al^2L}{2} \left(\frac{\sig^2}{n} + \mathbb{E}\left\|\frac{\one^T}{n} \partial f(Z(t+1))\right\|^{2} \right).
\end{split}
\end{align}

Moreover, using the relation $\langle \x,\y\rangle= \frac{1}{2}\|\x\|^2 +  \frac{1}{2}\|\y\|^2 - \frac{1}{2} \|\x-\y\|^2$, we find the following for the second term in the RHS of \eqref{eq:lsmooth2}:
\begin{align}\label{eq:cross}
\begin{split}
&\mathbb{E}\left\langle\nabla f\left(\frac{\one^TX(t)}{n}\right), \frac{\one^T \partial f(Z(t+1))}{n}\right\rangle \\ =& \,\frac{1}{2} \E\,\left\| \nabla f\left(\frac{\one^TX(t)}{n}\right)\right\|^2 + \frac{1}{2} \E \,\left\|  \frac{\one^T \partial f(Z(t+1))}{n}\right\|^2 - \frac{1}{2}\E\, \left\| \nabla f\left(\frac{\one^TX(t)}{n}\right)-   \frac{\one^T \partial f(Z(t+1))}{n}\right\|^2.
\end{split}
\end{align}
Using $L$-lipschitz assumption of local gradients (Assumption \ref{assumption:lipshitz}), the last term in \eqref{eq:cross} reduces to 
\begin{align*}
\mathbb{E}\left\|\nabla f\left(\frac{ \one^T X(t)}{n}\right)-\frac{\one^T \partial f(Z(t+1))}{n}\right\|^{2}
\leq &\,\frac{1}{n}\, \sum_{i=1}^{n}\E\,\left\|\nabla f_{i}\left(\frac{\one^T X(t)}{n}\right)-\nabla f_{i}\left(\z_{i}(t+1)\right)\right\|^{2} \\
\leq &\,\frac{L^{2}}{n} \sum_{i=1}^{n} \mathbb{E}\left\|\frac{\one^T X(t)}{n}-\z_{i}(t+1)\right\|^{2},
\end{align*}
where we used $\nabla f(\x) = \frac{1}{n}\sum_{i=1}^n\nabla f_i(\x)$ in the first step. Replacing this in \eqref{eq:cross} and finally substituting the resulting expression of \eqref{eq:cross} in \eqref{eq:lsmooth2} yields
\begin{align}\label{eq:lsmooth3}
\begin{split}
\mathbb{E} f\left(\frac{\one^T X(t+1)}{n}\right)& \leq \mathbb{E}f\left(\frac{\one^T X(t)}{n}\right)+\frac{\alpha^{2} \sigma^{2} L}{2 n}\\
&+\frac{\alpha^{2} L-\alpha}{2} \,\mathbb{E}\left\|\frac{\one^T \partial f(Z(t+1))}{n}\right\|^{2}-\frac{\alpha}{2} \,\mathbb{E}\left\|\nabla f\left(\frac{\one^T X(t)}{n}\right)\right\|^{2} \\
&+\frac{\alpha L^{2}}{n} \sum_{i=1}^{n} \mathbb{E}\left\|\z_{i}(t+1)-\frac{\one^T X(t)}{n}\right\|^{2}.
\end{split}
\end{align}

By replacing the value of consensus error from Lemma \ref{lem:consensus_err} in \eqref{eq:lsmooth3}, we obtain the following:
\begin{align}\label{eq:lsmooth4}
\begin{split}
\mathbb{E} f\left(\frac{\one^T X(t+1)}{n}\right)& \leq \mathbb{E}f\left(\frac{\one^T X(t)}{n}\right)+\frac{\alpha^{2} \sigma^{2} L}{2 n}\\
&+\frac{\alpha^{2} L-\alpha}{2} \,\mathbb{E}\left\|\frac{\one^T \partial f(Z(t+1))}{n}\right\|^{2}-\frac{\alpha}{2} \,\mathbb{E}\left\|\nabla f\left(\frac{\one^T X(t)}{n}\right)\right\|^{2} \\
&+\frac{6\al^3C^2L^2}{\delta^2(1-\la)^2}\left(nD^2+D^2 +\xi_2\right).
\end{split}
\end{align}
Thus by rearranging terms in \eqref{eq:lsmooth4} and averaging both sides from $t=1$ to $t=T$ we conclude that 
\begin{align}\label{eq:nonconvex_fixedal}
&\frac{1-\alpha L}{T}\sum_{t=1}^{T} \,\mathbb{E}\left\|\frac{\one^T \partial f(Z(t+1))}{n}\right\|^{2}+\frac{1}{T}\sum_{t=1}^T \,\mathbb{E}\left\|\nabla f\left(\frac{\one^T X(t)}{n}\right)\right\|^{2} \nonumber\\
& \le \frac{2\left(f\left(\frac{\one^TX(1)}{n}\right)-f^{\star}\right)}{\al T} 
+ \frac{\al\sig^2L}{n} + \frac{12\al^2L^2C^2}{\delta^2(1-\la)^2}\left(\xi_2+nD^2+D^2\right).
\end{align}
By choosing $\al = \frac{\sqrt{n}}{L\sqrt{T}}$ and noting that $X(1)=0$ we derive the desired statement of the theorem. Note that $T\ge4n$ implies that $\frac{1-\alpha L}{T}\ge\frac{1}{2T}$ guaranteeing that the first term in the LHS of \eqref{eq:nonconvex_fixedal} is positive. 
The consensus error in \eqref{eq:consen_thm} is concluded from Lemma \ref{lem:consensus_err} with the given choice of step-size as in the statement of the theorem. This completes the proof of the theorem.

\end{document}